\newcommand{\Gc}{\mathcal{G}}
\newcommand{\Fc}{\mathcal{F}}
\newcommand{\Bc}{\mathcal{B}_f}
\renewcommand{\P}{\mathsf{P}}
\newcommand{\E}{\mathsf{E}}
\newcommand{\IER}{\mathsf{IER}}
\newcommand{\Geom}{\mathsf{Geom}}
\newcommand{\dg}{\textup{deg}}
\newtheorem{claim}[theorem]{Claim}
\newtheorem{assumption}{Assumption}
\newtheorem*{problem}{Problem}
\title[Two-Sample Tests for Random Graphs]{Two-Sample Tests for Large Random Graphs Using Network Statistics}
\begin{document}

\maketitle

\begin{abstract}
We consider a two-sample hypothesis testing problem, where the distributions are defined on the space of undirected graphs, 
and one has access to only one observation from each model.
A motivating example for this problem is comparing the friendship networks on Facebook and LinkedIn.
The practical approach to such problems is to compare the networks based on certain network statistics.
In this paper, we present a general principle for two-sample hypothesis testing in such scenarios without making any assumption about the network generation process.
The main contribution of the paper is a general formulation of the problem based on concentration of network statistics, and consequently, a consistent two-sample test that arises as the natural solution for this problem.
We also show that the proposed test is minimax optimal for certain network statistics. 
\end{abstract}

\begin{keywords}
Two-sample test; Random graphs; Minimax testing; Concentration inequalities.
\end{keywords}

\section{Introduction}

Hypothesis testing has received considerable attention in recent times, particularly in the high-dimensional setting, where the number of observations is much smaller than the dimension of data~\citep{Gretton_2012_jour_JMLR,Mukherjee_2015_jour_AnnStat}.
From this perspective, an extreme situation arises in network analysis, where we observe one large network and need to draw inference based on a single instance of a high-dimensional object.
Surprisingly, simple tests are known to work quite well in many of these graph inference problems. 
For instance, \citet{Bubeck_2016_jour_RSA} compute the number of signed triangles in a graph to test whether the graph has an underlying geometric structure, or it is purely random (Erd{\"o}s-R{\'e}nyi).
In the context of community detection, \citet{AriasCastro_2014_jour_AnnStat} use modularity and clique number to detect the presence of dense sub-graphs.
Along similar lines, \citet{Lei_2016_jour_AnnStat} use the largest few singular values of the adjacency matrix to decide whether a graph has $k$ or more communities. 
In all these works, the constructed tests exploit the fact that there exist some network statistics, for example number of triangles or the graph spectrum, that help to distinguish between the null and the alternative hypotheses.
This principle is not restricted to specific testing problems, but a general approach often used by practitioners from different disciplines in the context of testing and fitting of network models~\citep{Rubinov_2010_jour_NeuroImage}.
For instance,  \citet{Klimm_2014_jour_PLoSComputBiol} write that the Erd{\"o}s-R{\'e}nyi (ER) model is a poor fit for the brain network since \emph{``the clustering coefficient (in ER) is smaller and the path length shorter than that of anatomical brain networks''}.
However, such a statement is purely qualitative, and arises due to the lack of a formal framework for testing random graphs.

The purpose of the present paper is to formulate a general approach for hypothesis testing for large random graphs, particularly when one does not have access to independent and identically distributed samples generated from the same network model.  
We focus on the problem of two-sample hypothesis testing, where one observes two random graphs, possibly of different sizes. Based on the two given graphs, the problem is to decide whether the underlying distributions that generate the graphs are same or different.
The problem surfaces naturally in several domains, for instance, comparison of brain networks of people with and without Alzheimer~\citep{Stam_2007_jour_CerebralCortex}.
However, the problem has been previously studied only in a very restrictive setting, where the two graphs are assumed to be random dot product graphs (RDPG) defined on the same set of vertices~\citep{Tang_2016_jour_JCompGraphStat}. 
The RDPG model has a semi-parametric characterisation, which allows one to estimate the parameters up to some transformation, and then use these estimates for a two-sample test. Obviously, this approach cannot be directly extended to more general classes of network models.  
A more critical limitation is that the study cannot be easily extended to compare graphs that do not have a vertex correspondence, and more generally, graphs of different sizes. 

One can easily see that the aforementioned hypothesis testing problem is ill-posed in general since the distributions that generate graphs of different size reside in different spaces, and hence, cannot be directly compared.   
To resolve this issue, we take the practical approach of comparing two models in terms of network statistics. 
In other words, we compute a function $f$ for both graphs, and decide whether the two graphs are similar or different with respect to $f$.
It is obvious that this approach cannot distinguish between two distributions for which $f$ behave in the same way, but this is a cost that is often incurred in practice, where one needs to know the interesting features that distinguish two network models~\citep{Stam_2007_jour_CerebralCortex,Klimm_2014_jour_PLoSComputBiol}.
Special cases of such a problem can be posed, and even solved, quite easily. For example, a simple situation arises if we restrict only to ER graphs, and $f$ corresponds to the edge probability.
However, it is not clear how one can pose the problem formally in a more general setting, where one does not fix the network statistic or make any assumption on the network model. 
In this paper, we tackle this issue by restricting the class of network statistics $f$ that can be used, and then tailor the hypothesis testing problem to the pre-specified network statistic.
We restrict the class of statistics $f$ to those that concentrate for large graphs, and subsequently pose the testing problem in terms of the point of concentration of $f$.
As a consequence, we also find an uniformly consistent test that arises naturally for this problem, and is also optimal in some cases.

The paper is organised as follows. In Section~\ref{sec_statistic}, we present a general assumption on network statistics, and show that two common statistics, based on (i) number of triangles, and (ii) largest $k$ singular values of adjacency matrix, satisfy this assumption.
We set up the hypothesis testing problem in Section~\ref{sec_problem}, and present a generic two-sample test for the problem. 
We also show that if the null and alternative hypotheses are separated by a certain factor, then the test can achieve arbitrarily small error rates for large enough graphs (see Theorem~\ref{thm_main}). 
We provide more concrete statements in Section~\ref{sec_examples}, where we restrict the discussion to the above mentioned network statistics, and consider specific network models, namely,
edge-independent but inhomogeneous random graphs~\citep{Bollobas_2007_jour_RSA}, and random geometric graphs~\citep{Penrose_2003_book_Oxford}.
Subsequently, in Section~\ref{sec_minimax}, we consider inhomogeneous random graphs and the mentioned network statistics, and prove that the separation condition derived in Theorem~\ref{thm_main} matches with the detection boundary. In other words, the proposed test achieves the minimax rate of testing for certain class of random graphs and network statistics. 
We present proofs of all results in Section~\ref{sec_proof}.
Finally, we conclude in Section~\ref{sec_conclusion} with some discussion and open questions.

\section{Network statistic}
\label{sec_statistic}

We denote the set of all undirected unweighted graphs on $n$ vertices by $\Gc_n$, and
let $\Fc_n$ be the set of all probability distributions on $\mathcal{G}_n$.
For convenience, we write $\mathcal{G}_{\geq N}$ instead of $\bigcup_{n\geq N} \mathcal{G}_n$.
If $Q$ is a distribution, we use $G\sim Q$ to say that $G$ is generated from $Q$. 
We also use the notation $o_n(1)$ to denote the class of functions of $n$ that vanish as $n\to\infty$.

Let $(\mathcal{S},d)$ be a metric space. We call any mapping $f: \mathcal{G}_{\geq1} \to \mathcal{S}$ a network statistic. 
Observe that the function $f$ acts on all possible graphs, and maps them to a metric space.
Some natural examples include scalar functions such as edge density or number of triangles; multivariate statistics like largest $k$ singular values of adjacency matrix; and even quantities like degree distribution. For the last example, $\mathcal{S}$ corresponds to the space of all distributions on natural numbers endowed with a suitable metric.
All the popular network measures~\citep[see][]{Rubinov_2010_jour_NeuroImage,Bounova_2012_jour_PhysRevE} can be put in this framework.
However, not all such functions are useful for hypothesis testing. The following assumption provides a way to characterise the useful network statistics.

\begin{assumption}[Statistic that concentrates for large graphs]
\label{assum}
Given a network statistic $f$, we assume that there exist
\begin{itemize}
\item a set $\Fc_{f} \subset \Fc_{\geq1}$ (a class of distributions that meet the requirements for concentration of $f$), 
\item a mapping $\mu:\Fc_{f}\to\mathcal{S}$ (the point of concentration of $f$ for a particular distribution), and 
\item a mapping $\sigma: \Fc_{f}\to\mathbb{R}$ (the deviation of $f$ from $\mu$),  
\end{itemize}
such that 
\begin{equation}
\sup_{Q\in\Fc_{f}\cap\Fc_n} \P_{G\sim Q} \big( d(f(G),\mu(Q)) > \sigma(Q) \big) \in o_n(1) \;.
\label{eqn_concen}
\end{equation}
\end{assumption}

The above assumption implies that if we consider a {``good''} network statistic $f$, then for any distribution $Q\in\Fc_n$ and $G\sim Q$, the computed statistic $f(G)$ concentrates about an element in $\mathcal{S}$. 
The function $\mu$ maps the distribution $Q$ to this point of concentration. A natural choice is $\mu(Q) = \E_{G\sim Q}[f(G)]$ if the expectation exists, where the quantity $\sigma$ would be related to the standard deviation of $f(G)$.
However, we show later that a more general definition for $\mu$ and $\sigma$ helps to formulate some common testing problems.
To this end, one may note that the concentration~\eqref{eqn_concen} may not occur for all distributions in $\Fc_{\geq1}$, a typical example being models that are very sparse. 
Hence, Assumption~\ref{assum} restricts the class of distributions to $\Fc_f$ that can be viewed as a subset of distributions for which concentration occurs.

We note that Assumption~\ref{assum} is quite weak in general since it allows several trivial cases. For example, if $f$ corresponds to the average probability of edges, then setting the deviation $\sigma(\cdot)=1$ shows that $f$ satisfies Assumption~\ref{assum} for any arbitrary $\mu:\Fc_{\geq1}\to[0,1]$.  Hence, in this case, satisfying Assumption~\ref{assum} is of no practical significance.
We now consider few common statistics to show that the assumption often has interesting and useful consequences.

\begin{example}[Average probability of triangle]
Let $G$ be an undirected graph on $n$ vertices with adjacency matrix $A_G$, then the univariate statistic
\begin{equation}
f_\Delta(G) = \frac{1}{\binom{n}{3}} \sum_{i<j<k} (A_G)_{ij} (A_G)_{jk} (A_G)_{ik}
\end{equation}
provides an estimate of the average probability of occurrence of a triangle. 
Note that $f_\Delta$ maps $\Gc_{\geq1}$ to $\mathcal{S}=[0,1]$ with metric $d$ being the absolute difference. 
\end{example}

The above function $f_\Delta$ is a normalised version of the number of triangles, where the normalisation makes the statistic independent of the graph size. 
For instance, any Erd{\"o}s-R{\'e}nyi (ER) graph $G$ with edge probability $p$ satisfies $\E_G[f_\Delta(G)] = p^3$ irrespective of the graph size. The following result provides a choice of $\sigma$ such that $f_\Delta$ satisfies Assumption~\ref{assum} for a very broad class of distributions.

\begin{lemma}[$f_\Delta$ satisfies Assumption~\ref{assum} under a limited correlation condition]
\label{lem_fDelta_assum}
Define the quantity $\mu(Q) = \E_{G\sim Q} [f_\Delta(G)]$, and
let $\Fc_f$ be all distributions on graphs such that the presence of any triangle is not correlated with indicators of any non-overlapping edge or triangle, that is,
\begin{align*}
\Fc_f = \big\{ Q \in \Fc_{\geq1} : \text{for } &G\sim Q, \text{ and any } i<j<k \text{ and } i'<j'<k' \text{ with }
 |\{i,j,k\}\cap\{i',j',k'\}|\leq1,
\\&\E_{G\sim Q}[  \Delta_{ijk} \Delta_{i'j'k'}]= \E_{G\sim Q}[ \Delta_{ijk}]\E_{G\sim Q}[ \Delta_{i'j'k'}], \text{ and}
\\&\E_{G\sim Q}[ \Delta_{ijk} (A_G)_{i'j'} ]= \E_{G\sim Q}[ \Delta_{ijk}]\E_{G\sim Q}[ (A_G)_{i'j'} ] \big\}  ,
\end{align*} 
where we use the notation $\Delta_{ijk} = (A_G)_{ij} (A_G)_{jk} (A_G)_{ik}$.
Then $f_\Delta$ satisfies Assumption~\ref{assum} for above $\mu$ and $\Fc_f$ with $\sigma(Q) =   \sqrt{(3D_Q+1)\mu(Q)\ln n/\binom{n}{3}}$, where $n$ is the size of graph generated from $Q$ and $D_Q$ is the maximum expected degree of any node.
\end{lemma}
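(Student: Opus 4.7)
The natural route is Chebyshev's inequality combined with a careful second-moment calculation that exploits the two decorrelation conditions built into the definition of $\Fc_f$. Writing
\[
f_\Delta(G)-\mu(Q) = \binom{n}{3}^{-1}\sum_{T} \bigl(\Delta_T - \E[\Delta_T]\bigr),
\]
where $T$ ranges over unordered $3$-subsets and $\Delta_T=\Delta_{ijk}$ for $T=\{i,j,k\}$, it suffices to upper bound $\mathrm{Var}(f_\Delta(G))$ and apply Chebyshev. The target bound is clearly $\mathrm{Var}(f_\Delta) \le (3D_Q+1)\mu(Q)/\binom{n}{3}$, because then Chebyshev gives probability at most $1/\ln n = o_n(1)$, uniformly in $Q$.

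The plan for the variance is to split the double sum $\sum_{T,T'} \mathrm{Cov}(\Delta_T,\Delta_{T'})$ according to $|T\cap T'|\in\{1,2,3\}$ (the case $|T\cap T'|=0$ is absorbed into $|T\cap T'|\le 1$). The first decorrelation condition in $\Fc_f$ kills every pair with $|T\cap T'|\le 1$. For $T=T'$, I would use $\Delta_T\in\{0,1\}$ to get $\mathrm{Var}(\Delta_T)\le \E[\Delta_T]$, which sums to $\binom{n}{3}\mu(Q)$ and supplies the $``+1"$ in the bound.

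The interesting case is $|T\cap T'|=2$. Writing $T=\{i,j,k\}$ and $T'=\{i,j,m\}$ (and analogously when the shared pair is $\{i,k\}$ or $\{j,k\}$), the key observation is that $\Delta_{T'} = A_{ij}A_{im}A_{jm} \le A_{im}$, hence
\[
\mathrm{Cov}(\Delta_T,\Delta_{T'}) \le \E[\Delta_T \Delta_{T'}] \le \E[\Delta_T \, A_{im}].
\]
Since $|\{i,m\}\cap\{i,j,k\}|=1$, the second decorrelation condition in $\Fc_f$ lets me factor this as $\E[\Delta_T]\,\E[A_{im}]=\E[\Delta_T]\,p_{im}$. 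Then, for each fixed $T$, summing over the three choices of shared pair and over $m\notin T$ uses $\sum_{m\ne i} p_{im}\le D_Q$ (three times), giving $\sum_{T':|T\cap T'|=2} \mathrm{Cov}(\Delta_T,\Delta_{T'}) \le 3D_Q\,\E[\Delta_T]$. Summing over $T$ produces the $``3D_Q"$ piece.

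The only real obstacle is the bookkeeping in the $|T\cap T'|=2$ step: one has to pick, for each pair of shared vertices, the correct edge of $\Delta_{T'}$ to dominate (i.e.\ the edge incident to the non-shared vertex $m$) so that the remaining edge plus the original triangle together form a configuration to which the triangle--edge decorrelation assumption applies with intersection size exactly one. Once that is set up, everything else is Chebyshev and the uniform-in-$Q$ bound $1/\ln n \in o_n(1)$ completes the proof.
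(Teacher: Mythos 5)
Your proof is correct and follows essentially the same route as the paper's: Chebyshev after a variance bound, decomposing $\sum_{T,T'}\mathrm{Cov}(\Delta_T,\Delta_{T'})$ by $|T\cap T'|$, using the triangle--triangle decorrelation to kill $|T\cap T'|\le 1$, the Bernoulli bound $\mathrm{Var}(\Delta_T)\le\E[\Delta_T]$ for the diagonal, and for $|T\cap T'|=2$ dominating $\Delta_{T'}$ by an edge of $T'$ incident to its unique new vertex $m$ so that the triangle--edge decorrelation applies, then summing $\sum_{m}p_{im}\le D_Q$ over the three shared pairs. The paper's proof is literally this computation (its equation $(11)$ performs exactly your ``drop two of the three edges of $T'$'' step), so there is no genuine difference to report.
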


The above class of distributions encompasses a wide range of real-world network models since we only require that any triangle is uncorrelated from any non-overlapping edge or triangle. This result obviously holds for graphs with independent edges, but also other models such as certain random geometric  graphs.
For instance, \citet{Bubeck_2016_jour_RSA} use the following definition for geometric graphs. One samples $n$ random vectors $x_1,\ldots,x_n$ i.i.d. uniform from the unit ball in $\mathbb{R}^r$, which correspond to the vertices of the graph. For a given $p\in[0,1]$, edge $(i,j)$ is added if $x_i^Tx_j \geq \tau_{r,p}$, where the threshold $\tau_{r,p}$ is set such that $\P(x_i^Tx_j \geq \tau_{r,p} ) = p$.
We represent this class of distributions by $\Geom$, and note that any $Q\in\Geom\cap\Fc_n$ is defined by two parameters: the dimension of underlying Euclidean space, $r_Q$, and the edge probability, $p_Q$.
Due to Lemma~\ref{lem_fDelta_assum}, we can say that the statistic $f_\Delta$ satisfies Assumption~\ref{assum} when we consider distributions from $\Geom$ class.
\begin{lemma}[$f_\Delta$ satisfies Assumption~\ref{assum} for $\Geom$ class]
\label{lem_fDelta_Geom_assum}
For any $Q\in \Geom\cap\Fc_n$ with parameters $r_Q, p_Q$, define $\mu(Q) = \E_{G\sim Q} [f_\Delta(G)]$. For any absolute constant $C>0$,  $f_\Delta$ satisfies Assumption~\ref{assum} for the choice of functions  $\sigma(Q) = \sqrt{(3np_Q+1)\mu(Q)\ln n/\binom{n}{3}}$, and
\begin{equation*}
\Fc_f = \bigcup_{n\geq1} \left\{Q \in \Geom\cap\Fc_n : p_Q \geq \frac{1}{n} \text{ and } C \leq r_Q \leq \frac{(np)^4(\ln \frac{1}{p_Q})^3}{(\ln n)}  \right\} \;. 
\end{equation*} 
\end{lemma}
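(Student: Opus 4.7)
The plan is to deduce this lemma from Lemma~\ref{lem_fDelta_assum}. I will show that every $Q \in \Geom$ meeting the range conditions stated here satisfies the independence conditions appearing in the definition of $\Fc_f$ in Lemma~\ref{lem_fDelta_assum}, and then substitute the maximum expected degree (which in a $\Geom$ graph equals $(n-1)p_Q \le np_Q$, since every edge has probability exactly $p_Q$ by the choice of threshold $\tau_{r_Q,p_Q}$) to recover the declared form of $\sigma(Q)$.

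The key step is the verification of the two factorisation identities that define $\Fc_f$ in Lemma~\ref{lem_fDelta_assum}. When the triples $\{i,j,k\}$ and $\{i',j',k'\}$ are disjoint, the triangle indicator $\Delta_{ijk}$ and the triangle indicator $\Delta_{i'j'k'}$ (respectively the edge indicator $(A_G)_{i'j'}$) are measurable with respect to disjoint subsets of the i.i.d.\ position vectors $x_1,\dots,x_n$, hence exactly independent. When the two triples share a single vertex, say $i=i'$, I would condition on $x_i$; the remaining randomness then splits into the two disjoint groups $(x_j,x_k)$ and $(x_{j'},x_{k'})$, so the two indicators become conditionally independent. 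It then suffices to argue that $\E[\Delta_{ijk}\mid x_i]$ does not depend on $x_i$, which follows from rotational symmetry: for any orthogonal $R$ the pair $(Rx_j,Rx_k)$ is equidistributed with $(x_j,x_k)$, so the joint law of the three inner products $(x_i^Tx_j, x_i^Tx_k, x_j^Tx_k)$ depends on $x_i$ only through $\|x_i\|$. The triangle--edge identity follows by the same template.

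The range conditions $p_Q \ge 1/n$ and $C \le r_Q \le (np_Q)^4(\ln(1/p_Q))^3/\ln n$ will enter when quantifying the $o_n(1)$ tail required by Assumption~\ref{assum}. The lower bound on $p_Q$ keeps $\mu(Q)$ large enough that it dominates the Bernstein-type correction appearing inside the proof of Lemma~\ref{lem_fDelta_assum}. The bounds on $r_Q$ exclude two boundary regimes: the low-dimensional one where the geometric correlation degrades the concentration, and the high-dimensional near-ER one where $\mu(Q)$ collapses toward $p_Q^3$ while $\sigma(Q)$ does not shrink proportionally. The argument should be robust to small perturbations of these thresholds, which is why they appear with an unspecified absolute constant $C$.

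I expect the main obstacle to be the rotational-symmetry step in the case -- consistent with the definition of $\Geom$ in the text -- where the positions are uniform on the unit ball rather than on the unit sphere. On the ball, $\E[\Delta_{ijk}\mid x_i]$ genuinely depends on $\|x_i\|$, so the single-overlap covariance is strictly positive and must be bounded separately. The standard route is to use that $\|x_i\|$ concentrates near $1$ when $r_Q$ is large, so $\mathrm{Var}(\E[\Delta_{ijk}\mid x_i])$ is small; absorbing this residual into the Bernstein tail is where the lower bound $C \le r_Q$ is used in earnest. The secondary challenge is uniform control of all implicit constants across $\Fc_f$, so that the $o_n(1)$ bound in Assumption~\ref{assum} holds uniformly rather than just for each fixed $Q$.
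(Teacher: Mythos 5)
Your overall strategy is exactly the paper's: reduce to Lemma~\ref{lem_fDelta_assum} by verifying that $\Geom$ distributions satisfy its factorisation conditions, then substitute $D_Q = (n-1)p_Q < np_Q$ to recover the stated $\sigma$. The paper's proof consists of precisely these two observations and nothing more, so you have found the right route.

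Two remarks are worth making. First, your caveat about the single-overlap case is a legitimate subtlety the paper's terse proof does not address: conditioning on the shared vertex $x_i$ and invoking rotational symmetry to show $\E[\Delta_{ijk}\mid x_i]$ is constant is exactly the mechanism one would need, and you are right that on the open unit ball this quantity depends on $\|x_i\|$, producing a strictly positive covariance and breaking the exact factorisation required by $\Fc_f$ in Lemma~\ref{lem_fDelta_assum}. In fact the underlying reference (\citet{Bubeck_2016_jour_RSA}) draws the latent points uniformly from the unit \emph{sphere}, in which case $\|x_i\|$ is deterministic, the symmetry argument closes, and the paper's one-line assertion of uncorrelatedness is exact; the word ``ball'' in the present paper is best read as an informal slip. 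You should not expect to get away with a residual covariance absorbed into the Chebyshev bound, because Lemma~\ref{lem_fDelta_assum} requires exact factorisation, not approximate.

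Second, your speculation about where the constraints $p_Q \ge 1/n$ and $C \le r_Q \le (np_Q)^4(\ln(1/p_Q))^3/\ln n$ enter this lemma is misdirected: they play no role here. The paper states explicitly that ``the conditions in $\Fc_f$ stated above are not necessary at this stage of discussion''; they are imposed only so that later, in Corollary~\ref{cor_fDelta_Geom}, the deviation $\sigma(Q)$ can be consistently estimated from the single observed graph (Assumption~\ref{assum_est}). Shrinking $\Fc_f$ cannot invalidate Assumption~\ref{assum}, so once the factorisation conditions hold for all of $\Geom$, they hold a fortiori on the restricted class, and no further work involving the $r_Q$ bounds is needed for Lemma~\ref{lem_fDelta_Geom_assum} itself.
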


We note that the conditions in $\Fc_f$  stated above are not necessary at this stage of discussion. However, our subsequent discussion on hypothesis testing problem and approach require an estimate of $\sigma$ from the random graph, which in turn imposes few additional constraints on the model. In order to simplify our later discussions based on this setting, we restrict to the smaller set $\Fc_f$ defined above.  

It is easy to see that the choice of $\sigma$ mentioned in Lemmas~\ref{lem_fDelta_assum} and~\ref{lem_fDelta_Geom_assum} is not unique, and using a larger deviation function does not lead to violation of the assumption. However, in some cases, one can even consider a smaller deviation function provided that $\Fc_f$ is restricted accordingly. 
Consider the class of inhomogeneous random graphs with independent edges~\citep{Bollobas_2007_jour_RSA}, and denote the set of corresponding distributions by $\IER$.
In such graphs, each edge occurs with a different probability, and hence, any $Q\in\IER\cap\Fc_n$ is characterised by a $n\times n$ symmetric matrix $M_Q$ such that $\E_{G\sim Q}[A_G] = M_Q$. 
Hence, in this case, $\mu(Q) = \E_{G\sim Q} [f_\Delta(G)] = \frac{1}{\binom{n}{3}} \sum\limits_{i<j<k} (M_Q)_{ij} (M_Q)_{jk} (M_Q)_{ik}$.
If $\Fc_f = \IER$, then Lemma~\ref{lem_fDelta_assum} provides a choice of $\sigma$. The following result shows that if we only consider sparse $\IER$ graphs, then $f_\Delta$ also satisfies the same assumption for a smaller deviation function. 

\begin{lemma}[$f_\Delta$ satisfies Assumption~\ref{assum} for semi-sparse $\IER$ class]
\label{lem_fDelta_IER_assum}
For any $Q\in \IER\cap\Fc_n$ with associated matrix $M_Q$, if $\mu(Q) = \E_{G\sim Q} [f_\Delta(G)]  = \frac{1}{\binom{n}{3}} \sum\limits_{i<j<k} (M_Q)_{ij} (M_Q)_{jk} (M_Q)_{ik}$, then 
$f_\Delta$ satisfies Assumption~\ref{assum} for the choices $\sigma(Q) = 2\sqrt{\mu(Q)\ln n/\binom{n}{3}}$, and
\begin{equation*}
\Fc_f = \bigcup_{n\geq1} \left\{Q \in \IER\cap\Fc_n : \mu(Q) \geq \frac{\ln n}{n^3} \text{ and } \max_{i,j} \sum_{k\neq i,j} (M_Q)_{ik}(M_Q)_{jk} \leq 1 \right\} \;. 
\end{equation*} 
\end{lemma}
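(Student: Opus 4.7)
The plan is to establish the concentration via Chebyshev's inequality, using the semi-sparsity condition $\max_{i,j}\sum_{k\neq i,j}(M_Q)_{ik}(M_Q)_{jk}\leq 1$ to obtain a tighter variance bound than the one proved in Lemma~\ref{lem_fDelta_assum}.

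First I would rewrite everything in terms of the unnormalised triangle count $T_G:=\binom{n}{3}f_\Delta(G)=\sum_{i<j<k}\Delta_{ijk}$. Since $\E_{G\sim Q}[T_G]=\binom{n}{3}\mu(Q)$, the deviation event $\{|f_\Delta(G)-\mu(Q)|>\sigma(Q)\}$ coincides with $\{|T_G-\E[T_G]|>2\sqrt{\E[T_G]\ln n}\}$. By Chebyshev's inequality, it then suffices to show $\mathrm{Var}(T_G)\leq 4\,\E[T_G]$, which immediately yields a probability at most $1/\ln n = o_n(1)$, uniformly in $Q\in\Fc_f\cap\Fc_n$.

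Next I would decompose $\mathrm{Var}(T_G)$ into diagonal variances $\mathrm{Var}(\Delta_{ijk})$ and off-diagonal covariances $\mathrm{Cov}(\Delta_T,\Delta_{T'})$ over distinct triples. Because edges are independent under $\IER$, $\mathrm{Cov}(\Delta_T,\Delta_{T'})=0$ whenever $|T\cap T'|\leq 1$, since the two triples then involve disjoint edge sets. The diagonal contribution is bounded by $\sum_{i<j<k}\E[\Delta_{ijk}]=\E[T_G]$ because each $\Delta_{ijk}$ is $\{0,1\}$-valued. Only pairs of triples sharing exactly one edge survive among the off-diagonal terms, and for such a pair $\{i,j,k\},\{i,j,k'\}$ with $k\neq k'$ a direct computation gives $\mathrm{Cov}(\Delta_{ijk},\Delta_{ijk'})\leq (M_Q)_{ij}(M_Q)_{ik}(M_Q)_{jk}(M_Q)_{ik'}(M_Q)_{jk'}$.

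The main technical step, where the new hypothesis does its work, is bounding the accumulated shared-edge contribution. Summing over ordered pairs, this total is at most $\sum_{i<j}(M_Q)_{ij}\bigl(\sum_{k\neq i,j}(M_Q)_{ik}(M_Q)_{jk}\bigr)^{2}$. I would then use the semi-sparsity hypothesis to bound one copy of the inner sum by $1$, leaving
\[
\sum_{i<j}\sum_{k\neq i,j}(M_Q)_{ij}(M_Q)_{ik}(M_Q)_{jk} \;=\; 3\,\E[T_G],
\]
where the factor $3$ comes from the three choices of which edge of an unordered triangle plays the role of $(i,j)$. Combining the diagonal and off-diagonal pieces yields the desired $\mathrm{Var}(T_G)\leq 4\,\E[T_G]$. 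The condition $\mu(Q)\geq\ln n/n^3$ does not enter this concentration argument and is recorded in $\Fc_f$ only for downstream purposes (estimating $\sigma$ from a single graph, as indicated in the paragraph following the lemma).

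The main obstacle will be the clean combinatorial bookkeeping of the shared-edge sum: verifying that after invoking the semi-sparsity condition the residue collapses to exactly $3\,\E[T_G]$, so that the constants line up with the chosen $\sigma(Q)=2\sqrt{\mu(Q)\ln n/\binom{n}{3}}$. Everything else, in particular the Chebyshev step and the case analysis by overlap size, is routine.
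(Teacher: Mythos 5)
Your proposal is correct and follows essentially the same route as the paper: bound $\mathrm{Var}\bigl(\binom{n}{3}f_\Delta(G)\bigr)\leq 4\binom{n}{3}\mu(Q)$ using edge-independence to kill covariances between triples sharing at most one vertex and the semi-sparsity condition to control the shared-edge covariances, then apply Chebyshev. The only cosmetic difference is the bookkeeping — the paper fixes the outer triple $(i',j',k')$ and bounds the resulting inner sum by $4\,\E[\Delta_{i'j'k'}]$ (inheriting the computation from its Lemma~\ref{lem_fDelta_assum}), while you organize the off-diagonal sum by the shared edge $(i,j)$ and arrive at the same $3\,\E[T_G]$ bound — and both land on the identical constant and conclusion.
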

Observe that the condition $\sum_k (M_Q)_{ik}(M_Q)_{jk} \leq 1$ is equivalent to stating that the graphs are sparse enough so that, in the expected sense, no edge appears in more than one triangle. 
The condition on minimum growth rate of $\mu(Q)$ is not necessary at this stage, and simply ensures that $\mu(Q)$ can be estimated from $G\sim Q$. 

Based on Lemma~\ref{lem_fDelta_assum}, one may also consider a combination of both $\IER$ and $\Geom$ classes, which allows one to tackle problems where two random graphs are chosen from $\IER\cup\Geom$. 
This emphasises the flexibility of the present discussion and the forthcoming results in the sense that as long as one can show that a network statistic concentrates, one can apply the framework and result of this paper.
We now look at one more common statistic.

\begin{example}[Normalised largest $k$ singular values] 
Let $\lambda_1(A_G)\geq\ldots\geq\lambda_k(A_G)$ be the largest $k$ singular values of adjacency matrix $A_G$. Then the multivariate statistic
\begin{equation}
f_{\lambda}(G) = \frac1n \big(\lambda_1(A_G),\lambda_2(A_G),\ldots,\lambda_k(A_G)\big)^T\;,
\end{equation}
maps every graph to $\mathcal{S}=\mathbb{R}^k$ endowed with any standard metric $d$. For concreteness, we assume $d$ is the Euclidean distance.
\end{example}
This is yet another statistic whose concentration properties have been well studied, especially for the $\IER$ class of random graphs.
The scaling of $\frac1n$ again helps to reduce the dependence on graph size. Alternative ways to achieve this could be to consider the spectrum of normalised adjacency or normalised Laplacian~\citep{Chung_1997_book_AMS}.

In the case of $\IER$ graphs, \citet{Alon_2002_jour_IsraelJMath} provide the rate of concentration of eigenvalues about their expected values. Based on this, one may set $\mu(Q) = \E_{G\sim Q}[f_\lambda(G)]$, and claim that $f_\lambda$ satisfies Assumption~\ref{assum}.
But a more interesting fact, from a practical perspective, is that for $Q\in\IER\cap\Fc_n$, the statistic $f_\lambda(G)$ also concentrates about $\frac1n \big(\lambda_1(M_Q),\lambda_2(M_Q),\ldots,\lambda_k(M_Q)\big)^T$.
Thus, one can make the following claim for $f_\lambda$ using concentration results of~\citet{Lu_2013_jour_EJComb}.

\begin{lemma}[$f_\lambda$ satisfies Assumption~\ref{assum} for semi-sparse $\IER$ class]
\label{lem_flambda_assum}
For any $Q\in \IER\cap\Fc_n$ with associated matrix $M_Q$, define $\mu(Q) = \frac1n \big(\lambda_1(M_Q),\ldots,\lambda_k(M_Q)\big)^T$, and let $D_Q =\max_i  \sum_j (M_Q)_{ij}$. Then 
$f_\lambda$ satisfies Assumption~\ref{assum} for the choices $\sigma(Q) = \frac{2.1}{n}\sqrt{kD_Q}$, and
\begin{equation*}
\Fc_f = \bigcup_{n\geq1} \left\{Q \in \IER\cap\Fc_n : D_Q \geq (\ln n)^{4.1} \right\} \;, 
\end{equation*} 
where the constant 2.1 (or 4.1) may be replaced by any value greater than 2 (resp., 4).
\end{lemma}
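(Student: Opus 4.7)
The plan is to reduce the claim about the multivariate statistic $f_\lambda$ to a single scalar concentration inequality on the spectral norm of $A_G - M_Q$, and then invoke the theorem of \citet{Lu_2013_jour_EJComb} on the concentration of the adjacency matrix of a sparse inhomogeneous random graph around its mean.

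First, I would use Weyl's perturbation inequality: for any two real symmetric matrices $B_1, B_2$ of the same size, $|\lambda_i(B_1) - \lambda_i(B_2)| \leq \|B_1 - B_2\|_2$ for every index $i$. Applied to $B_1 = A_G$ and $B_2 = M_Q$, together with the definition of $f_\lambda$ and $\mu$, this gives the deterministic bound
\begin{equation*}
d\bigl(f_\lambda(G), \mu(Q)\bigr)^2
= \frac{1}{n^2} \sum_{i=1}^k \bigl(\lambda_i(A_G) - \lambda_i(M_Q)\bigr)^2
\leq \frac{k}{n^2}\, \|A_G - M_Q\|_2^2.
\end{equation*}
Thus the event $d(f_\lambda(G), \mu(Q)) > \sigma(Q) = \frac{2.1}{n}\sqrt{kD_Q}$ is contained in the event $\|A_G - M_Q\|_2 > 2.1 \sqrt{D_Q}$, reducing the problem to controlling the spectral norm of a centered inhomogeneous random matrix.

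Second, I would invoke the Lu--Peng theorem: whenever $Q\in\IER\cap\Fc_n$ satisfies $D_Q \geq (\ln n)^{4+\epsilon}$ for some fixed $\epsilon>0$, one has $\|A_G - M_Q\|_2 \leq (2 + o_n(1))\sqrt{D_Q}$ with probability $1 - o_n(1)$, where both $o_n(1)$ terms depend only on $n, D_Q$ and $\epsilon$. Taking $\epsilon = 0.1$ matches the condition $D_Q \geq (\ln n)^{4.1}$ defining $\Fc_f$, and for all $n$ large enough the factor $2 + o_n(1)$ is bounded above by $2.1$. Combining this with the Weyl bound from the previous step yields $d(f_\lambda(G), \mu(Q)) \leq \sigma(Q)$ with probability $1 - o_n(1)$, as required. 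The generalization allowing the constants $2.1$ and $4.1$ to be any values strictly greater than $2$ and $4$ respectively comes directly from the flexibility of choosing $\epsilon$ in the Lu--Peng bound.

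The only subtle point, and the one I expect to require some care, is the uniformity of the $o_n(1)$ bounds over all $Q\in\Fc_f\cap\Fc_n$ required by the supremum in \eqref{eqn_concen}. This is not really an obstacle, since the error terms in the Lu--Peng theorem are explicit functions of $n$ and $D_Q$, and the condition $D_Q \geq (\ln n)^{4.1}$ provides a uniform lower bound on $D_Q$ that lets one replace the $Q$-dependent $o_n(1)$ by a single sequence tending to zero. Apart from this bookkeeping, the proof is essentially a one-line consequence of Weyl's inequality and the cited spectral concentration bound.
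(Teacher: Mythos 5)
Your proof is correct and follows exactly the same route as the paper: Weyl's inequality to bound $d(f_\lambda(G),\mu(Q))$ by $\frac{\sqrt{k}}{n}\|A_G-M_Q\|$, followed by the Lu--Peng spectral concentration bound $\|A_G-M_Q\|\leq(2+\delta)\sqrt{D_Q}$ with probability $1-o_n(1)$ under the degree condition. The only cosmetic difference is that you phrase the degree threshold as $(\ln n)^{4+\epsilon}$ while the paper writes $C(\ln n)^4$ for a large constant $C$; these are interchangeable here since the $\Fc_f$ condition $D_Q\geq(\ln n)^{4.1}$ dominates either one for large $n$.
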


The restricted set $\Fc_f$ is crucial here. One can show that $f_\lambda$ does not satisfy Assumption~\ref{assum} over the larger set $\IER\cap\Fc_{\geq1}$.  
In particular, set $k=1$ and consider any ER distribution $Q\in\Fc_n$ with edge probability $\frac{c}{n}$ for some constant $c>0$. Then it is known that $f_\lambda(G) \geq \frac1n\sqrt{\frac{\ln n}{\ln \ln n}}$ for any $G\sim Q$~\citep[see][]{Krivelevich_2003_jour_CombinProbabComput}, whereas $\mu(Q)\leq \frac{c}{n}$ decays much faster.
This shows the importance of restricting the statement of Assumption~\ref{assum} to a suitable set $\Fc_f$.
Moreover, we note that in the case of $f_\lambda$, one may use an alternative choice of $\sigma$ that provides concentration even for sparser graphs $D_Q \geq (\ln n)^{1.1}$~\citep{Lei_2015_jour_AnnStat}. Such a choice of $\sigma$ depends on $\max_{ij} (M_Q)_{ij}$, which is difficult to estimate from a single random graph.

\section{Two-sample hypothesis testing}
\label{sec_problem}

Based on the discussions about the network statistic, we are now prepared for a formal statement of the hypothesis testing problem under consideration.
\begin{problem}
Let $f$ be a pre-specified network statistic that satisfies Assumption~\ref{assum} with the associated quantities $\Fc_{f}$ and $\mu$.
Let $Q,Q'\in\Fc_{f}$, and $G$ and $G'$ be random graphs (of possibly different sizes) generated from $Q$ and $Q'$, respectively.
Given $G$ and $G'$, we test the null hypothesis
\begin{equation*}
H_0 : Q,Q' \in \Fc_f \text{ with } d(\mu(Q), \mu(Q'))  \leq \epsilon(Q,Q')
\end{equation*}
against the alternative hypothesis 
\begin{equation*}
H_1 : Q,Q' \in \Fc_f \text{ with } d(\mu(Q), \mu(Q'))  > \rho(Q,Q'), 
\label{eqn_alterhyp}
\end{equation*}
where $\epsilon,\rho$ are non-negative scalar functions of the distributions $Q,Q'$ so that $\epsilon(Q,Q') \leq \rho(Q,Q')$. 
\end{problem}

Setting $\epsilon(\cdot,\cdot)=0$ restricts $H_0$ to the case where $\mu(Q)=\mu(Q')$. However, as shown later, it is often useful to provide some leeway by allowing $\epsilon$ to be positive. 
On the other hand, the function $\rho$ plays the role of a separation that is often required in hypothesis testing~\citep[see, for instance,][]{Ingster_2000_jour_ESAIM}.
In the present context, we later show that if $\rho(Q,Q')$ is too small, then there exist hypotheses that cannot be distinguished by any test.

In the rest of the section, we construct a two-sample test, and show that for certain range of $\epsilon$ and $\rho$, the test can achieve arbitrarily small error for large graphs.
In Section~\ref{sec_minimax}, we use examples to demonstrate that the test is minimax optimal in some cases. Before proceeding, we need few more definitions.
For any statistic $f$ and distribution $Q\in \Fc_f$, we define
$\Bc(Q,\epsilon) = \big\{ Q'\in\Fc_f : d(\mu(Q),\mu(Q')) \leq \epsilon(Q,Q') \big\}$.
Intuitively, one can think of this set as the inverse image of a closed ball in $\mathcal{S}$ centred at $\mu(Q)$ and radius specified by the function $\epsilon$.
Similarly, we define a complement of the ``ball'' for the function $\rho$ as
$\overline{\Bc(Q,\rho)} = \big\{ Q'\in\Fc_f : d(\mu(Q),\mu(Q')) > \rho(Q,Q') \big\}$.
In this terminology, one can see that $H_0$ is true if $Q'\in \Bc(Q,\epsilon)$, while $H_1$ is true for $Q'\in\overline{\Bc(Q,\rho)}$.

\subsection{A consistent two-sample test}

We now construct a test based on concentration of $f$ that is ensured by Assumption~\ref{assum}.
To construct the test, we require an additional assumption on $f$, namely, the fact that $\sigma$ as defined in Assumption~\ref{assum} can be estimated accurately from the graph.
\begin{assumption}[$\sigma(Q)$ can be estimated from $G$]
\label{assum_est}
Let $\Fc_f$ and $\sigma$ be as defined in Assumption~\ref{assum}. There exists a function $\widehat\sigma : \mathcal{G}_{\geq1} \to \mathbb{R}$ with the property that for any $\delta>0$,
\begin{equation}
\sup_{Q\in\Fc_{f}\cap\Fc_n} \P_{G\sim Q} \big( \vert\widehat\sigma(G) - \sigma(Q) \vert > \delta\cdot\sigma(Q) \big)  \in o_n(1).
\end{equation}
\end{assumption}

This assumption rules out some possibilities. For example, one could also state Lemma~\ref{lem_flambda_assum} with $\sigma$ defined in terms of maximum edge probability, $\max_{ij} (M_Q)_{ij}$ instead of maximum expected degree $D_Q$.
However, in that case, it would be impossible to estimate $\sigma$ for a single random graph.
Thus, such a concentration of $f_\lambda$ stated in terms of $\max_{ij} (M_Q)_{ij}$ is not useful for the test described below. 

The two-sample test that we propose is quite straightforward. Given the random graphs $G$ and $G'$, we define the test statistic
\begin{equation}
\label{eqn_teststat}
T(G,G') = \frac{d(f(G), f(G'))}{2\widehat\sigma(G) + 2\widehat\sigma(G')} \;,
\end{equation}
Based on the test statistic in~\eqref{eqn_teststat}, we accept the null hypothesis $H_0$ if $T(G,G') \leq1$, and reject it if $T(G,G')>1$. 
The following result shows  that the above test is uniformly consistent for a suitable network measure $f$, and a large enough separation $\rho$. 
\begin{theorem}[Proposed test is uniformly consistent]
\label{thm_main}
Let $f$ be a network statistic satisfying Assumptions~\ref{assum} and~\ref{assum_est}, and $\Fc_f$ and $\sigma$ be as specified in the assumptions. 
If the functions $\epsilon$ and $\rho$ satisfy
\begin{equation}
\epsilon(Q,Q') \leq 0.5(\sigma(Q) + \sigma(Q'))
\qquad \text{and} \qquad
\rho(Q,Q') \geq 3.5(\sigma(Q) + \sigma(Q'))
\label{eqn_separation_condns}
\end{equation}
for all $Q,Q'\in\Fc_{f}$, 
then 
\begin{align*}
\sup_{Q\in\Fc_{f} \cap \Fc_{\geq n}} \bigg( 
\sup_{Q'\in\Bc(Q,\epsilon)\cap \Fc_{\geq n}} &\underbrace{\P_{G\sim Q,G'\sim Q'}(T(G,G') >1)}_{\textup{Type-I error}} + 
\\& \qquad
\sup_{Q'\in\overline{\Bc(Q,\rho)}\cap \Fc_{\geq n}}\underbrace{\P_{G\sim Q,G'\sim Q'}(T(G,G') \leq1)}_{\textup{Type-II error}} \bigg)
\in o_n(1).
\label{eqn_alphatest_thm}
\end{align*}
\end{theorem}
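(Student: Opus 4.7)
The proof is a triangle-inequality plus union-bound argument; the constants $0.5$ and $3.5$ in~\eqref{eqn_separation_condns} are engineered precisely so that, after allowing for the factor $2$ in the denominator of $T(G,G')$ and a small multiplicative estimation error in $\widehat\sigma$, the null and alternative sides of the hypothesis stay cleanly separated. The plan is to fix some $\delta\in(0,1/4)$ (say $\delta=1/5$) and work on the intersection of the four high-probability events $d(f(G),\mu(Q))\le\sigma(Q)$, $d(f(G'),\mu(Q'))\le\sigma(Q')$, $|\widehat\sigma(G)-\sigma(Q)|\le\delta\sigma(Q)$, and $|\widehat\sigma(G')-\sigma(Q')|\le\delta\sigma(Q')$. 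By Assumptions~\ref{assum} and~\ref{assum_est} applied to $Q$ and $Q'$ separately, together with a union bound, this intersection has probability at least $1-o_n(1)$, uniformly in $Q,Q'\in\Fc_f\cap\Fc_{\geq n}$; the extension from $\Fc_f\cap\Fc_n$ to $\Fc_f\cap\Fc_{\geq n}$ is immediate because the $o_n(1)$ rate in each assumption is eventually small for every individual size.

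On this good event I sandwich the test statistic. The numerator is controlled by the triangle inequality applied to the chain $f(G),\mu(Q),\mu(Q'),f(G')$. In the null case, using $d(\mu(Q),\mu(Q'))\le\epsilon(Q,Q')\le 0.5(\sigma(Q)+\sigma(Q'))$, one obtains $d(f(G),f(G'))\le 1.5(\sigma(Q)+\sigma(Q'))$. In the alternative case, the reverse triangle inequality combined with $d(\mu(Q),\mu(Q'))>\rho(Q,Q')\ge 3.5(\sigma(Q)+\sigma(Q'))$ yields $d(f(G),f(G'))\ge 2.5(\sigma(Q)+\sigma(Q'))$. The denominator is pinned between $2(1-\delta)(\sigma(Q)+\sigma(Q'))$ and $2(1+\delta)(\sigma(Q)+\sigma(Q'))$, so $T(G,G')\le 1.5/(2(1-\delta))\le 1$ under $H_0$ (using $\delta\le 1/4$) and $T(G,G')\ge 2.5/(2(1+\delta))>1$ under $H_1$ (using $\delta<1/4$); the choice $\delta=1/5$ makes both inequalities strict.

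Combining these two deterministic bounds with the $1-o_n(1)$ control of the good event yields that the Type-I probability is $o_n(1)$ whenever $Q'\in\Bc(Q,\epsilon)$, and the Type-II probability is $o_n(1)$ whenever $Q'\in\overline{\Bc(Q,\rho)}$. Crucially, the $o_n(1)$ bound depends on $(Q,Q')$ only through the suprema in Assumptions~\ref{assum} and~\ref{assum_est}, so all three suprema in the theorem can be pulled outside without degrading the rate, and the two error probabilities can then be added. I do not anticipate a real obstacle here: the argument is driven entirely by the two assumptions, and the only care point is to pick $\delta$ in the sweet spot $(0,1/4)$ where both the $1.5$-vs-$2(1-\delta)$ bound (for Type I) and the $2.5$-vs-$2(1+\delta)$ bound (for Type II) go through, i.e.\ where the separation constants in~\eqref{eqn_separation_condns} comfortably beat the factor-$2$ denominator inflated by the $\widehat\sigma$-estimation slack.
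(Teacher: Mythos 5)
Your proof is correct and follows essentially the same route as the paper's: both rely on the triangle inequality across $f(G),\mu(Q),\mu(Q'),f(G')$, apply Assumptions~\ref{assum} and~\ref{assum_est} via a union bound, and exploit the same constants $0.5$, $1.5$, $2.5$, $3.5$ to separate the two hypotheses; you merely phrase the argument as a single "good event" rather than a chain of probability bounds. The only small refinement is your explicit choice of $\delta=1/5$, which handles the strict inequality $T>1$ in the Type-II direction a bit more carefully than the paper's use of $\delta=1/4$.
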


In simple terms, the above theorem states that if the hypothesis testing problem is defined with $\epsilon$ and $\rho$ satisfying~\eqref{eqn_separation_condns}, and both graphs have at least $n$ vertices, then for any pair of $Q,Q'$, the two-sample test described above achieves an error rate (Type-I $+$ Type-II) that vanishes as $n\to\infty$.
Thus, the test is an uniformly consistent test. We note here that unlike the standard literature on hypothesis testing, we consider the asymptotics in the size of the graphs rather than the number of independent observations.

One may use a weaker variant of Assumption~\ref{assum_est} , where one needs to estimate some upper bound of $\sigma$. This results in a weaker test that can only distinguish models with a larger separation. 

\section{Examples}
\label{sec_examples}

We elaborate our discussion on Theorem~\ref{thm_main} using the two statistics $f_\Delta$ and $f_\lambda$.

\subsection{Two-sample testing using $f_\Delta$}
Recall that $f_\Delta$ is defined as an estimate of the mean probability of triangle occurrence, and for both $\IER$ and $\Geom$ classes, $f_\Delta(G)$ concentrates at $\mu(Q) = \E_{G\sim Q}[f_\Delta(G)]$.
Since, $f_\Delta$ satisfies Assumption~\ref{assum}, one can pose a two-sample testing problem as described in Section~\ref{sec_problem}, where the set of distributions need to be restricted to a suitable $\Fc_f$.
Let us also simplify the problem by fixing $\epsilon(Q,Q') = 0$, that is, we do not distinguish between two models if they are mapped into the same point by $\mu$.
Based on Theorem~\ref{thm_main}, we claim that when the problem is restricted to semi-sparse $\IER$ class, then the statistic in~\eqref{eqn_teststat} leads to a consistent test for appropriately specified $\rho$.

\begin{corollary}[Consistent test for semi-sparse $\IER$ using $f_\Delta$]
\label{cor_fDelta_IER}
Consider the setting of Lemma~\ref{lem_fDelta_IER_assum}. Then $f_\Delta$ satisfies Assumption~\ref{assum_est} for $\widehat\sigma(G) = 2\sqrt{f_\Delta(G)\ln n/\binom{n}{3}}$, where $n$ is number of vertices in $G$. 
Furthermore, the proposed test is uniformly consistent for the testing problem with 
\begin{equation*}
 \rho(Q,Q') \geq 7\left( \sqrt{\frac{\mu(Q)\ln n}{\binom{n}{3}}} + \sqrt{\frac{\mu(Q') \ln n'}{\binom{n'}{3}}}\right)
\end{equation*} 
where $Q\in\Fc_f\cap\Fc_n$ and $Q'\in\Fc_f\cap\Fc_{n'}$.
\end{corollary}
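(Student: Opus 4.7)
The corollary has two halves: verifying Assumption~\ref{assum_est} for the stated $\widehat\sigma$, and deducing uniform consistency from Theorem~\ref{thm_main}. The second half is immediate once the first is in hand, because the choice $\sigma(Q) = 2\sqrt{\mu(Q)\ln n/\binom{n}{3}}$ from Lemma~\ref{lem_fDelta_IER_assum} turns the separation condition $\rho(Q,Q') \geq 3.5(\sigma(Q)+\sigma(Q'))$ of Theorem~\ref{thm_main} into exactly the $7$-multiple bound on $\rho$ that appears in the corollary, so all the real content sits in the first half.

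To verify Assumption~\ref{assum_est}, the plan is to transfer the additive concentration of $f_\Delta$ from Lemma~\ref{lem_fDelta_IER_assum} to a multiplicative concentration of $\widehat\sigma$, exploiting the identity $\widehat\sigma(G)/\sigma(Q) = \sqrt{f_\Delta(G)/\mu(Q)}$. Concretely, fix $\delta > 0$: the event $|\widehat\sigma(G) - \sigma(Q)| > \delta\sigma(Q)$ forces $f_\Delta(G)/\mu(Q)$ outside $[(1-\delta)^2, (1+\delta)^2]$, which in turn implies $|f_\Delta(G) - \mu(Q)| > \delta'\mu(Q)$ for some $\delta' = \delta'(\delta) > 0$ (one can take $\delta' = 2\delta - \delta^2$). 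By Lemma~\ref{lem_fDelta_IER_assum} the last event is already contained in $\{|f_\Delta(G) - \mu(Q)| > \sigma(Q)\}$ up to an $o_n(1)$ probability, so it would suffice to show $\sigma(Q) \leq \delta' \mu(Q)$ uniformly on $\Fc_f \cap \Fc_n$, i.e.\ $\sigma(Q)/\mu(Q) \to 0$ uniformly.

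The main obstacle is this last step: the ratio equals $2\sqrt{\ln n/(\binom{n}{3}\mu(Q))}$, and the constraint $\mu(Q) \geq \ln n/n^3$ built into $\Fc_f$ only makes it bounded (by roughly $2\sqrt 6$), not vanishing. I see two natural ways out. One is to read the constraint on $\mu(Q)$ as a strict $\omega(\ln n/n^3)$ growth, in which case the ratio is genuinely $o(1)$ and the previous paragraph's argument closes. The other is to appeal to the weaker variant of Assumption~\ref{assum_est} mentioned at the end of Section~\ref{sec_problem}, where $\widehat\sigma$ need only sandwich $\sigma(Q)$ up to fixed constant factors with high probability; this still drives the proof of Theorem~\ref{thm_main} through but inflates the constant $3.5$, and that inflation can be absorbed into the factor $7$ appearing in the corollary. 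Either way, once $\widehat\sigma$ is controlled, a direct application of Theorem~\ref{thm_main} to $Q \in \Fc_f \cap \Fc_{\geq n}$ and $Q' \in \Fc_f \cap \Fc_{\geq n'}$ yields the uniform consistency statement.
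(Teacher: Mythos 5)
Your diagnosis of the obstacle is exactly right, but your two proposed escape routes are not what the paper does, and neither one actually closes the gap as stated. The problem with treating Lemma~\ref{lem_fDelta_IER_assum} as a black box (i.e.\ trying to fit the event $\{|f_\Delta(G)-\mu(Q)| > \delta'\mu(Q)\}$ inside $\{|f_\Delta(G)-\mu(Q)| > \sigma(Q)\}$) is, as you observed, that $\sigma(Q)/\mu(Q)$ is only bounded (tending to $2\sqrt6$ at the edge of $\Fc_f$), never vanishing, so the inclusion fails for small $\delta$. Reinterpreting the constraint $\mu(Q)\geq \ln n/n^3$ as strict $\omega$-growth changes the hypothesis of the lemma; appealing to a "sandwich-up-to-constants'' variant of Assumption~\ref{assum_est} would require restating and reproving Theorem~\ref{thm_main} (and would inflate the $3.5$, whereas the corollary's factor $7 = 3.5 \times 2$ leaves no slack to absorb it). Neither move is made in the paper.

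The actual resolution is to not go through the black-box concentration statement at all, but to reach back into the \emph{proof} of Lemma~\ref{lem_fDelta_IER_assum}, which established the variance bound $\mathsf{Var}_{G\sim Q}(f_\Delta(G)) \leq 4\mu(Q)/\binom{n}{3}$, and apply Chebyshev directly to the relative-deviation event. Specifically,
\[
\P_{G\sim Q}\bigl(f_\Delta(G)-\mu(Q) > \delta\mu(Q)\bigr)
\leq \frac{\mathsf{Var}_{G\sim Q}(f_\Delta(G))}{\delta^2\mu(Q)^2}
\leq \frac{4}{\delta^2\binom{n}{3}\mu(Q)},
\]
and now the very constraint $\mu(Q)\geq\ln n/n^3$ that you noted gives only a bounded ratio $\sigma/\mu$ suffices here, since $\binom{n}{3}\mu(Q)\gtrsim \ln n/6\to\infty$, yielding $O(1/(\delta^2\ln n))\in o_n(1)$ uniformly over $\Fc_f\cap\Fc_n$; the lower tail is handled symmetrically. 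In short, the deviation threshold $\delta\mu(Q)$ can be well below $\sigma(Q)$, but Chebyshev still controls it because the variance bound is quadratically stronger than what the one-sided $\sigma$-level concentration records. The rest of your argument (the identification $\widehat\sigma/\sigma=\sqrt{f_\Delta/\mu}$, the conversion to an interval for $f_\Delta/\mu$, and the reduction of the consistency claim to Theorem~\ref{thm_main} via $\rho\geq 3.5(\sigma(Q)+\sigma(Q'))$ with $\sigma$ from Lemma~\ref{lem_fDelta_IER_assum}) matches the paper.
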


The above result provides a good estimator for the deviation function $\sigma$, and the rest follows immediately from Theorem~\ref{thm_main}. The sufficient condition on $\rho$ stated above may not seem very intuitive, but we show in the next section that the condition is also necessary (up to logarithmic factor) for two-sample testing of $\IER$ graphs using $f_\Delta$.  
In the case of the $\Geom$ class, we obtain a result similar to Corollary~\ref{cor_fDelta_IER}.

\begin{corollary}[Consistent test for $\Geom$ using $f_\Delta$]
\label{cor_fDelta_Geom}
Consider the setting of Lemma~\ref{lem_fDelta_Geom_assum}. For any graph $G$ with $n$ vertices, let $\widehat{p}_G$ denote the estimated edge density, that is, $\widehat{p}(G) = {\binom{n}{2}}^{-1}\sum_{i<j} (A_G)_{ij}$.
Then $f_\Delta$ satisfies Assumption~\ref{assum_est} for $\widehat\sigma(G) = \sqrt{(3n\widehat{p}(G)+1)f_\Delta(G)\ln n/\binom{n}{3}}$. 

As a consequence, the proposed test is uniformly consistent for the testing problem with 
\begin{equation*}
 \rho(Q,Q') \geq {3.5}\left( \sqrt{\frac{(3np_Q+1)\mu(Q)\ln n}{\binom{n}{3}}} + \sqrt{\frac{(3n'p_{Q'}+1)\mu(Q')\ln n'}{\binom{n'}{3}}}\right)
\end{equation*} 
where $Q\in\Fc_f\cap\Fc_n$ and $Q'\in\Fc_f\cap\Fc_{n'}$.
\end{corollary}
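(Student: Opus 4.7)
The corollary splits into two pieces: verifying Assumption~\ref{assum_est} for the proposed $\widehat\sigma$, and then deducing uniform consistency. The second piece is immediate from Theorem~\ref{thm_main}: the condition $\rho(Q,Q') \geq 3.5(\sigma(Q)+\sigma(Q'))$ in the corollary is exactly the second half of~\eqref{eqn_separation_condns} with $\sigma$ written out via Lemma~\ref{lem_fDelta_Geom_assum}, the choice $\epsilon(Q,Q')=0$ trivially satisfies the first half, and Assumption~\ref{assum} is already supplied by Lemma~\ref{lem_fDelta_Geom_assum}. So the substance of the proof lies in the verification of Assumption~\ref{assum_est}.

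For that, I would write
\[
\frac{\widehat\sigma(G)^2}{\sigma(Q)^2} \;=\; \frac{3n\widehat{p}(G)+1}{3np_Q+1}\cdot\frac{f_\Delta(G)}{\mu(Q)},
\]
and reduce the task to establishing, for every fixed $\eta>0$ and uniformly over $Q\in\Fc_f\cap\Fc_n$, the two relative-concentration statements (i)~$|\widehat{p}(G)-p_Q|\leq \eta\, p_Q$ and (ii)~$|f_\Delta(G)-\mu(Q)|\leq \eta\,\mu(Q)$, each with probability $1-o_n(1)$. Using $\sqrt{1\pm x}-1 = O(x)$ these imply $|\widehat\sigma(G)-\sigma(Q)|\leq \delta\,\sigma(Q)$ for any prescribed $\delta$ by choosing $\eta$ small enough.

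For statement (i), I would exploit the $\Geom$ structure: edges $(i,j)$ and $(i',j')$ with $\{i,j\}\cap\{i',j'\}=\emptyset$ are independent because they are determined by disjoint subsets of the i.i.d.\ latent vectors $x_1,\ldots,x_n$. Hence $\widehat{p}(G)$ is a bounded $U$-statistic of order $2$, and a variance computation (or a Bernstein-type $U$-statistic tail bound) combined with the conditions $p_Q \geq \frac{1}{n}$ and $C\leq r_Q \leq (np_Q)^4(\ln \frac{1}{p_Q})^3/\ln n$ built into $\Fc_f$ delivers the relative precision. For statement (ii) I would revisit the concentration underlying Lemma~\ref{lem_fDelta_Geom_assum}: the same dimension constraint forces $\mu(Q)\binom{n}{3}\gg (3np_Q+1)\ln n$, so that the absolute deviation $\sigma(Q)$ is already $o(\mu(Q))$, and the absolute-deviation bound from Lemma~\ref{lem_fDelta_Geom_assum} therefore upgrades automatically to a relative-concentration bound.

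The main obstacle is preserving uniformity over $Q$ down into the sparse regime $p_Q\approx 1/n$: there the $+1$ term in $3np_Q+1$ is non-negligible, so the $U$-statistic tail bound for $\widehat{p}$ must still achieve $\eta p_Q$ precision with probability $1-o_n(1)$, and the latent-space dimension $r_Q$ must be high enough (enforced via its upper bound) to prevent $\mu(Q)$ from collapsing below the threshold at which $f_\Delta$ ceases to concentrate in relative error. It is precisely the interplay between $n$, $p_Q$ and $r_Q$ that the otherwise technical-looking definition of $\Fc_f$ is calibrated to enforce, and a careful bookkeeping of these exponents is what makes the relative-concentration argument go through uniformly.
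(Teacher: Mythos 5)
Your overall decomposition matches the paper's: reduce Assumption~\ref{assum_est} to separate relative-concentration statements for $\widehat p(G)$ and for $f_\Delta(G)$, the former handled by a variance/Chebyshev computation using pairwise independence of non-overlapping edges in the $\Geom$ model (the U-statistic / Bernstein alternative you mention is overkill; plain Chebyshev suffices, as in the paper). The first half of your argument is therefore essentially the paper's.

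However, your step (ii) contains a genuine gap. You assert that the constraints in $\Fc_f$ force $\mu(Q)\binom{n}{3}\gg (3np_Q+1)\ln n$, so that $\sigma(Q)=o(\mu(Q))$ and the absolute deviation bound of Lemma~\ref{lem_fDelta_Geom_assum} upgrades automatically to a relative one. That is not true over the full class $\Fc_f$. Take a sequence $Q_n$ with $r_{Q_n}$ at the allowed ceiling $(np_Q)^4(\ln\frac{1}{p_Q})^3/\ln n$ and $\mu(Q_n)$ at the Bubeck et al. lower bound $cp_Q^3(\ln\frac1{p_Q})^{3/2}r_Q^{-1/2}$; then $\mu(Q_n)\binom{n}{3}\asymp np_Q\sqrt{\ln n}$ while $(3np_Q+1)\ln n\asymp np_Q\ln n$, so the ratio $\sigma(Q)^2/\mu(Q)^2=(3np_Q+1)\ln n/\bigl(\binom{n}{3}\mu(Q)\bigr)$ behaves like $\sqrt{\ln n}\to\infty$, i.e.\ $\sigma(Q)\gg\mu(Q)$. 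The correct route (and the paper's) is to apply Chebyshev with threshold $\eta\mu(Q)$ directly, which produces $\mathsf{Var}(f_\Delta)/(\eta\mu(Q))^2\lesssim (3np_Q+1)/(\eta^2\binom{n}{3}\mu(Q))$ -- note the absence of the $\ln n$ factor present in $\sigma^2$. Under the same extremal sequence this bound scales like $1/\sqrt{\ln n}\to 0$, so relative concentration holds after all; it is the $\ln n$ that was baked into $\sigma$, not the variance itself, that is responsible for the failure of your ``automatic upgrade.'' Plugging in the two Bubeck et al. lower bounds on $\mu(Q)$ (for $p_Q\geq\tfrac14$ and $p_Q<\tfrac14$ separately) and the constraints $np_Q\geq1$, $C\leq r_Q\leq (np_Q)^4(\ln\frac1{p_Q})^3/\ln n$ is how the paper verifies that this Chebyshev ratio is $o_n(1)$ uniformly over $\Fc_f$.
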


As mentioned before, one can also pose a testing problem on $\IER\cup\Geom$. In particular, we consider the following two-sample version of the problem studied in~\citep{Bubeck_2016_jour_RSA}.
Let there be a sequence of probabilities $(p_n)_{n\geq 1}\subset[0,1]$, and a sequence of dimensions $(r_n)_{n\geq1}\subset\mathbb{N}$. 
For each $n$, consider a set containing exactly two distributions from $\Fc_n$: (i) an ER distribution with edge probability $p_n$, and (ii) the other from $\Geom$ class with parameters $p_n$ and $r_n$.
Let two graphs be generated on $n$ vertices from either of these two distributions. The problem is to test whether both are generated from the same model, or different models.

\begin{corollary}[Consistent test for distinguishing between ER and $\Geom$]
\label{cor_fDelta_ERvGeom}
Consider above problem with $\frac{\ln n}{n}\leq p_n \leq\frac{1}{\sqrt{n}}$ for all $n$. The proposed test is uniformly consistent if $r_n = o_n\left( (\ln \frac{1}{p_n})^3 \right)$.
\end{corollary}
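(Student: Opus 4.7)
The plan is to reduce to Theorem~\ref{thm_main} through Corollary~\ref{cor_fDelta_Geom}. When both graphs come from the same model (both $\mathrm{ER}$ or both $\mathrm{Geom}$) the null $\mu(Q)=\mu(Q')$ holds trivially, so the only case that demands work is $Q=\mathrm{ER}(n,p_n)$ against $Q'=\mathrm{Geom}(n,p_n,r_n)$. Under the hypotheses $\ln n/n\leq p_n\leq 1/\sqrt n$ and $r_n=o((\ln\frac{1}{p_n})^3)$ I would first verify membership in the appropriate $\Fc_f$'s: for the Geom graph the bound $r_Q\leq(np_Q)^4(\ln\frac{1}{p_Q})^3/\ln n$ of Lemma~\ref{lem_fDelta_Geom_assum} is automatic since $(np_n)^4/\ln n\geq (\ln n)^3$, while for ER the $\IER$ semi-sparse condition of Lemma~\ref{lem_fDelta_IER_assum} reduces to $np_n^2\leq 1$, i.e.\ $p_n\leq 1/\sqrt n$. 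Since the $\IER$ deviation is dominated by the Geom one, the single estimator $\widehat\sigma$ from Corollary~\ref{cor_fDelta_Geom} meets Assumption~\ref{assum_est} for both models.

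The heart of the proof is a lower bound on the gap $|\mu(Q_{\mathrm{ER}})-\mu(Q_{\mathrm{Geom}})|=|\mu(Q_{\mathrm{Geom}})-p_n^3|$. I would expand $\Delta_{ijk}$ about $p_n$ and use rotational symmetry of the Geom model to reduce, up to lower-order corrections, to the signed-triangle expectation
\[
\E_{\mathrm{Geom}}\big[((A_G)_{ij}-p_n)((A_G)_{jk}-p_n)((A_G)_{ik}-p_n)\big],
\]
whose lower bound of order $p_n^3(\ln\frac{1}{p_n})^{3/2}/\sqrt{r_n}$ for the spherical geometric graph is the Gaussian-tail estimate of \citet{Bubeck_2016_jour_RSA}, which I would invoke directly.

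For the final step, I would upper bound $\rho$ in Corollary~\ref{cor_fDelta_Geom} using $\mu(Q)\leq C p_n^3(1+(\ln\frac{1}{p_n})^{3/2}/\sqrt{r_n})$ and $n=n'$; a short arithmetic comparison shows that the Theorem~\ref{thm_main} separation $|\mu(Q)-\mu(Q')|>\rho(Q,Q')$ reduces, under $np_n\geq \ln n$, to the condition $r_n\ll (\ln n)^2(\ln\frac{1}{p_n})^3$, which is comfortably implied by the hypothesis $r_n=o((\ln\frac{1}{p_n})^3)$. Theorem~\ref{thm_main} then yields the claimed uniform consistency.

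The principal obstacle is the lower bound on the signed-triangle expectation in the spherical geometric graph: this is the only genuinely probabilistic step in the argument and is the technical heart of \citet{Bubeck_2016_jour_RSA}, so I would cite it rather than redo the delicate Gaussian-tail calculation. Everything else amounts to membership checking in $\Fc_f$ and routine arithmetic.
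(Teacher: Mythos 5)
Your proposal is correct and follows essentially the same route as the paper's proof: verify $\mathcal{F}_f$-membership and Assumption~\ref{assum_est} (the paper uses $\sigma=\max\{\sigma_E,\sigma_G\}$, you observe $\sigma_G$ dominates, which amounts to the same thing), invoke the Bubeck et al.\ lower bound on $\mu(Q_{\mathrm{Geom}})-p_n^3$, and check the separation condition of Theorem~\ref{thm_main}. Your explicit arithmetic reduction to $r_n\ll(\ln n)^2(\ln\frac{1}{p_n})^3$ is a slightly more detailed rendering of the paper's assertion that $\mu(Q'),\sigma(Q),\sigma(Q')$ are all $o_n(\mu(Q))$, but the substance is identical.
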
 

The above result simply implies that the condition for identifiability derived by \citet{Bubeck_2016_jour_RSA} for the sparse one-sample version of the problem remains unchanged in the two-sample case even if we do not assume any knowledge about the parameters. 
The restriction on $p_n$ is a consequence of Lemma~\ref{lem_fDelta_IER_assum}, and the lower bound also helps in accurate estimation of the unknown $p_n$.
It is also easy to verify that the above result holds even when the two graphs are of different size but the edge probabilities are same.

\subsection{Two-sample testing using $f_\lambda$}

We now discuss in more detail about testing using the $f_\lambda$ statistic for the case of $\IER$ graphs. We state a result below that is along the lines of Corollary~\ref{cor_fDelta_IER}, but the main objective of this part is to demonstrate there are cases, where one needs to expand the null hypothesis by allowing $\epsilon(Q,Q')$ to be a positive function.

The situation typically arises if we deal with graphs of different sizes. For instance, let $Q,Q'$ correspond to ER models with graph sizes $n,n'$ and edge probabilities $p,p'$, respectively.
Further, assume that we consider only the largest singular value, that is, $f_\lambda = \frac{\lambda_1(A_G)}{n}$. As a consequence,
$\mu(Q) = p - \frac{p}{n}$, and $\mu(Q') = p' - \frac{p'}{n'}$. 
Typically, we would like to call the distributions $Q,Q'$ same if $p=p'$, irrespective of the the graph sizes, but in this case, if $n\neq n'$, then one can see that $\mu(Q)\neq\mu(Q')$ even if $p=p'$.
This observation suggests that for the case of ER graphs, we should not distinguish between distributions for which $\epsilon(Q,Q') = O\left(\min\left\{\frac1n,\frac{1}{n'}\right\}\right)$.

One can easily imagine other scenarios, in particular for stochastic block models, where similar situations arise. The following result presents a general guarantee in this setting. 

\begin{corollary}[Consistent test for semi-sparse $\IER$ using $f_\lambda$]
\label{cor_flambda_IER}
Consider the setting given in Lemma~\ref{lem_flambda_assum}. Then $f_\lambda$ satisfies Assumption~\ref{assum_est} for $\widehat\sigma(G) = \frac{2.1}{n}\sqrt{k\widehat{D}(G)}$, where $n$ is the number of vertices in $G$ and $\widehat{D}(G)$ is its maximum degree, and $k$ is the number of largest singular values computed in $f_\lambda$. 
Hence, the proposed test is uniformly consistent for the testing problem with any
\begin{equation*}
 \epsilon(Q,Q') \leq \frac{C}{\min\{n,n'\}}
 \text{ and }
 \rho(Q,Q') \geq {7.5\sqrt{k}}\left( \frac{\sqrt{D_Q}}{n} + \frac{\sqrt{D_{Q'}}}{n'}\right)
\end{equation*} 
where $C>0$ is any absolute constant, and $Q\in\Fc_f\cap\Fc_n$ and $Q'\in\Fc_f\cap\Fc_{n'}$.
\end{corollary}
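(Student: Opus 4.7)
The plan is to handle the two claims of the corollary in order: verifying Assumption~\ref{assum_est} for the proposed estimator $\widehat\sigma(G) = \frac{2.1}{n}\sqrt{k\widehat{D}(G)}$, and then checking the separation conditions of Theorem~\ref{thm_main} so that consistency follows.

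For Assumption~\ref{assum_est}, observe that $\widehat\sigma(G)/\sigma(Q) = \sqrt{\widehat D(G)/D_Q}$, so it suffices to show multiplicative concentration of the observed maximum degree $\widehat D(G)$ around the maximum expected degree $D_Q$. I would apply Bernstein's inequality to each vertex degree $\deg_G(i) = \sum_{j\neq i}(A_G)_{ij}$, which for $Q\in\IER$ is a sum of independent Bernoullis with mean $\mu_i = \sum_{j\neq i}(M_Q)_{ij} \leq D_Q$. Using $\mu_i \leq D_Q$ in the denominator of the Bernstein exponent, this gives
\begin{equation*}
\P\bigl(\lvert \deg_G(i)-\mu_i\rvert > \delta D_Q\bigr) \leq 2\exp\bigl(-c\delta^2 D_Q\bigr)
\end{equation*}
for any fixed $\delta\in(0,1)$ and an absolute constant $c>0$. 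A union bound over the $n$ vertices, combined with the $\Fc_f$-constraint $D_Q \geq (\ln n)^{4.1}$, makes the failure probability $o_n(1)$. The upper tail yields $\widehat D(G) \leq (1+\delta)D_Q$; for the matching lower tail I would apply the same bound to the specific vertex $i^\star \in \arg\max_i \mu_i$ and note $\widehat D(G) \geq \deg_G(i^\star) \geq (1-\delta)D_Q$. Passing to square roots and absorbing constants then gives $\lvert\widehat\sigma(G)-\sigma(Q)\rvert \leq \delta \sigma(Q)$ with probability $1-o_n(1)$ uniformly over $Q\in\Fc_f\cap\Fc_n$, as required.

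For the consistency claim, I would plug $\sigma(Q)+\sigma(Q') = 2.1\sqrt{k}\bigl(\tfrac{\sqrt{D_Q}}{n}+\tfrac{\sqrt{D_{Q'}}}{n'}\bigr)$ into the separation conditions~\eqref{eqn_separation_condns} of Theorem~\ref{thm_main}. The lower hypothesis $\rho(Q,Q') \geq 7.5\sqrt{k}\bigl(\tfrac{\sqrt{D_Q}}{n}+\tfrac{\sqrt{D_{Q'}}}{n'}\bigr)$ translates to $\rho(Q,Q') \geq (7.5/2.1)(\sigma(Q)+\sigma(Q')) > 3.5(\sigma(Q)+\sigma(Q'))$, which is exactly what is needed. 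For the upper hypothesis $\epsilon(Q,Q') \leq C/\min\{n,n'\}$, after taking the $n\leq n'$ branch the requirement $\epsilon \leq 0.5(\sigma(Q)+\sigma(Q'))$ reduces to $C \leq 1.05\sqrt{k}\sqrt{D_Q}$; since $D_Q \geq (\ln n)^{4.1}$ grows unboundedly this holds for every $n$ beyond some threshold depending on $C$, hence over all $Q,Q'\in\Fc_f$. Theorem~\ref{thm_main}, applied with the $\widehat\sigma$ just verified, then delivers uniform consistency.

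The main obstacle is the multiplicative concentration of $\widehat D(G)$ around $D_Q$ in the presence of highly heterogeneous vertex degrees: vertices with $\mu_i \ll D_Q$ only admit a Bernstein estimate in terms of $\mu_i$, which does not by itself give a $\delta D_Q$ deviation. The fix is to note that we only need an \emph{additive} bound of size $\delta D_Q$, which Bernstein delivers uniformly in $i$ once $\mu_i \leq D_Q$ is used to control the denominator of the exponent; this is what closes the union bound at the $D_Q \geq (\ln n)^{4.1}$ sparsity threshold. The remainder is essentially bookkeeping with the constants from Lemma~\ref{lem_flambda_assum} and Theorem~\ref{thm_main}.
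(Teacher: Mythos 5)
Your proof is correct, and your concentration argument for $\widehat D(G)$ is actually cleaner than the one in the paper. The paper's proof of the upper tail of $\widehat D(G)$ splits the union bound over vertices into two cases according to whether $D_i \geq (\ln n)^{1.1}$ or $D_i < (\ln n)^{1.1}$: in the first case it compares $\dg_i$ to $D_i$ at scale $\delta D_i$ via Bernstein, and in the second case (where a $\delta D_i$ deviation is too small to be useful) it switches to an exponential Markov inequality, showing $\P(\dg_i > (1+\delta)D_Q) \leq e^{eD_i - D_Q}$ and invoking $D_Q \geq (\ln n)^{4.1} \gg (\ln n)^{1.1} > D_i$. You sidestep this entirely by measuring deviations at the uniform scale $\delta D_Q$ rather than $\delta D_i$: since $\mu_i \leq D_Q$, the Bernstein exponent $t^2 / (2\mu_i + 2t/3)$ with $t = \delta D_Q$ is at least $c\delta^2 D_Q$ regardless of how small $\mu_i$ is, so a single Bernstein bound covers all vertices. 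This also makes explicit that, were it only Assumption~\ref{assum_est} at stake, the $(\ln n)^{4.1}$ threshold could be relaxed; the stronger threshold in $\Fc_f$ is needed for the spectral concentration in Lemma~\ref{lem_flambda_assum}, not for the degree concentration here. The lower-tail argument (apply the same bound only to a vertex attaining $\mu_{i^\star} = D_Q$) and the bookkeeping that translates the $\epsilon,\rho$ hypotheses into~\eqref{eqn_separation_condns} match the paper's proof, including the observation that $\epsilon \leq C/\min\{n,n'\}$ is eventually dominated by $0.5(\sigma(Q)+\sigma(Q'))$ thanks to $D_Q \geq (\ln n)^{4.1}$.
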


We note that based on Theorem~\ref{thm_main}, the allowable upper limit of $\epsilon$ could be increased, but may not be needed from a practical perspective.

\section{Minimax optimality}
\label{sec_minimax}

The purpose of this section is to prove that the separation conditions on $\rho$ stated in Corollaries~\ref{cor_fDelta_IER} and~\ref{cor_flambda_IER} are necessary for testing between the two alternatives.
This implies that the proposed two-sample test is optimal when we restrict to $\IER$ graphs, and consider network statistics $f_\Delta$ or $f_\lambda$.
Instead of directly stating the converse of  Corollaries~\ref{cor_fDelta_IER} and~\ref{cor_flambda_IER}, we digress a little to study the total variation distance between a particular pair of network models.

Let  $(p_n)_{n\geq 1}\subset[0,1]$ be a sequence of probabilities, and $(\gamma_n)_{n\geq 1}\subset(0,1)$ be a sequence of small positive values such that $\gamma_n\leq \min\{p_n,1-p_n\}$.
We will consider two sequences of models on $\Fc_{\geq1}$. The first one consists of ER models $(Q_n)_{n\geq1}$, where $Q_n\in\Fc_{2n}$ with edge probability $p_n$.
Note that we consider only graphs on even numbers of vertices.
The second sequence, $(Q'_n)_{n\geq1}$ consists of mixture distributions defined as follows. For each $n$, let $\ell\in\{-1,+1\}^{2n}$ be a balanced labelling of $2n$ vertices, that is, $\sum_i \ell_i = 0$.
For any such labelling, there is a model such that edges between vertices of same label occur with probability $(p_n+\gamma_n)$, while edges between vertices with different labels occur with probability $(p_n-\gamma_n)$.
There are $\binom{2n}{n}$ possible balanced labellings, and $Q'_n$ is an uniform mixture of the models obtained with different labellings.
In other words, we generate a graph from $Q'_n$ as follows. First we randomly, and uniformly, choose any balanced labelling of $2n$ vertices, and then generate a random graph with two chosen communities, and aforementioned edge probabilities.
The following result states that this pair of sequences are asymptotically indistinguishable for small $\gamma_n$.
\begin{theorem}[Distance between $Q_n$ and $Q'_n$]
\label{thm_separation}
If $\gamma_n = o_n\left(\sqrt{\frac{p_n(1-p_n)}{n}}\right)$, then the total variation distance
\begin{equation*}
 \Vert Q_n - Q'_n \Vert_{TV} = \frac12\sum_G \left| Q_n(G) - Q'_n(G) \right|  \in o_n(1) \;,
\end{equation*}
 where the sum is taken over all possible graphs on $2n$ vertices.
\end{theorem}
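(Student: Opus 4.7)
The plan is to use the second-moment / chi-squared method, which is the standard tool for proving indistinguishability of a simple null against a mixture alternative. By Le Cam's inequality,
$$\|Q_n - Q'_n\|_{TV}^2 \leq \frac14\left(\E_{G\sim Q_n}[L_n(G)^2] - 1\right), \quad L_n := dQ'_n/dQ_n,$$
so it suffices to prove $\E_{Q_n}[L_n^2] = 1 + o_n(1)$. Writing $Q'_n$ as the uniform mixture $\frac{1}{\binom{2n}{n}}\sum_\ell Q_\ell$ over balanced labellings $\ell$ and applying Fubini,
$$\E_{Q_n}[L_n^2] = \E_{\ell,\ell'}\left[\E_{G\sim Q_n}\left[\frac{dQ_\ell}{dQ_n}(G)\,\frac{dQ_{\ell'}}{dQ_n}(G)\right]\right],$$
for independent balanced $\ell,\ell'$. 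Since both $Q_n$ and each $Q_\ell$ have independent edges, the inner expectation factorises over pairs $(i,j)$.

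A short direct computation of the per-edge term (expanding over $G_{ij}\in\{0,1\}$) gives
$$\E_{Q_n}\!\left[\frac{dQ_\ell}{dQ_n}(G_{ij})\,\frac{dQ_{\ell'}}{dQ_n}(G_{ij})\right] = 1 + \alpha_n\,\ell_i\ell_j\ell'_i\ell'_j, \quad \alpha_n := \frac{\gamma_n^2}{p_n(1-p_n)}.$$
Setting $s_i := \ell_i\ell'_i \in\{\pm1\}$ and $S := \sum_{i=1}^{2n} s_i$, using $1+x \leq e^x$ together with the identity $\sum_{i<j} s_is_j = \frac12(S^2-2n)$, we arrive at the key reduction
$$\E_{Q_n}[L_n^2] \;\leq\; e^{-\alpha_n n}\,\E\!\left[\exp(\alpha_n S^2/2)\right].$$

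To identify the law of $S$, I condition on $\ell$ and use permutation symmetry: one may take $\ell = (+1,\ldots,+1,-1,\ldots,-1)$, whereupon $S = 2(2K-n)$ with $K\sim\mathrm{Hypergeom}(2n,n,n)$ counting the number of $+1$'s of $\ell'$ in the first $n$ positions. Thus $S$ is centred with $\mathrm{Var}(S)=\Theta(n)$, and Hoeffding's inequality for sampling without replacement further shows that $S$ is sub-Gaussian with proxy $v=O(n)$.

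The main technical obstacle is bounding the chi-square-type moment generating function $\E[\exp(\alpha_n S^2/2)]$. I would handle it by Gaussian de-squaring: introducing an independent $Z\sim \mathcal{N}(0,1)$ and writing $\exp(tS^2/2) = \E_Z[\exp(\sqrt{t}\,SZ)]$, a Fubini swap followed by the sub-Gaussian estimate gives
$$\E\!\left[\exp(\alpha_n S^2/2)\right] \;\leq\; \E_Z\!\left[\exp(\alpha_n v Z^2/2)\right] \;=\; (1-\alpha_n v)^{-1/2},$$
valid once $\alpha_n v < 1$. The hypothesis $\gamma_n = o_n(\sqrt{p_n(1-p_n)/n})$ is exactly $\alpha_n n \in o_n(1)$, so $\alpha_n v \to 0$, the bound tends to $1$, and combined with the prefactor $e^{-\alpha_n n}\to 1$ we conclude $\E_{Q_n}[L_n^2] = 1+o_n(1)$, which completes the argument.
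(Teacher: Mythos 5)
Your proposal is correct, and it coincides with the paper's argument in its first half but diverges in the second.

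Like the paper, you bound the total variation distance by the chi-squared divergence (the paper phrases this as Cauchy--Schwarz on $L(Q,Q')=\sum_G Q'(G)^2/Q(G)$, which is the same as your $\E_{Q_n}[L_n^2]$), expand $Q'_n$ as a mixture over balanced labellings, factor the likelihood ratio over independent edges, and arrive at the same per-edge term $1+\alpha_n\,\ell_i\ell_j\ell'_i\ell'_j$ with $\alpha_n=\gamma_n^2/p_n(1-p_n)$. The paper then uses symmetry to fix $\ell'$ and writes things in terms of $S_n=\sum_{i\le n}\ell_i$, whereas you set $s_i=\ell_i\ell'_i$ and $S=\sum_i s_i$; with $\ell'$ fixed to $(+1,\dots,+1,-1,\dots,-1)$ one has $S=2S_n$, so the two reductions to $e^{-\alpha_n n}\,\E[\exp(2\alpha_n S_n^2)]$ are identical.

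The genuinely different part is how you control this moment generating function. The paper uses an elementary iterative scheme: it represents a balanced labelling via a sequence of conditional Bernoullis, bounds $\E_{\ell_{k+1}\mid\ell_{\le k}}[\exp(xS_{k+1}^2)]$ one coordinate at a time (Claim~\ref{claim1}), and tracks the resulting recursion $x_{k+1}=x_k+2ex_k^2$ (Claim~\ref{claim2}). You instead identify $S=2(2K-n)$ with $K\sim\mathrm{Hypergeom}(2n,n,n)$, invoke Hoeffding's convex-ordering theorem for sampling without replacement to get a sub-Gaussian MGF bound $\E[e^{\lambda S}]\le e^{\lambda^2 v/2}$ with $v=O(n)$ (uniformly in $\lambda$), and then apply the Gaussian de-squaring identity $e^{tS^2/2}=\E_Z[e^{\sqrt{t}SZ}]$ with Fubini to obtain $\E[e^{\alpha_n S^2/2}]\le(1-\alpha_n v)^{-1/2}$. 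This is cleaner and more modular than the paper's hand-rolled recursion, at the cost of importing Hoeffding's 1963 comparison result and the $\chi^2_1$-MGF calculation; the paper's route is fully self-contained but requires careful bookkeeping of the sequence $(x_k)$. Both approaches yield $\E[L_n^2]=1+o_n(1)$ under $\alpha_n n=o_n(1)$ and are correct.
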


The above result can be of independent interest, particularly when one observes that $Q'_n$ is the uniform distribution on all stochastic block models with two balanced communities, and edge probabilities $(p_n+\gamma_n)$ and $(p_n - \gamma_n)$. 
For this problem, it is well known that the partitions can be identified if and only if $\gamma_n > C\sqrt{\frac{p_n(1-p_n)\ln n}{n}}$ for some constant $C>0$.~\citep[see, for instance,][]{Chen_2016_jour_JMLR}. 
Theorem~\ref{thm_separation} shows that without the $\ln n$ factor, one cannot even distinguish the planted graph from a purely random graph.
The above detection boundary also matches the fundamental limit of partial recovery of communities in sparse stochastic block models~\citep{Mossel_2015_jour_ProbTheory}.
Some results related to Theorem~\ref{thm_separation} can also be found in~\citet{AriasCastro_2014_jour_AnnStat} and~\citet{Chatterjee_2012_jour_AnnStat} for the problems of testing or estimation of $\IER$ models, and in~\citet{Carpentier_2015_jour_EJStat} and~\citet{Banks_2016_arxiv} for the case of signal detection. 

Our interest in Theorem~\ref{thm_separation} stems from the fact that it provides a ``hard'' instance in the context of testing with $f_\Delta$ or $f_\lambda$.  
To put this in perspective, let $p_n\leq \frac12$, and consider the two-sample testing problem, where the two graphs are generated from either of the above models.
Theorem~\ref{thm_separation} implies that if $\gamma_n = o_n\left(\sqrt{\frac{p_n}{n}}\right)$, then no two-sample test can achieve a low Type-I + Type-II error rate. 
On the other hand, a simple calculation combined with Corollary~\ref{cor_flambda_IER} shows that the proposed test with $f_\lambda$ statistic is consistent when $\gamma_n \geq 15\sqrt{\frac{p_n}{n}}$ (see Claim~\ref{claim_lambda2} for details). 
Similarly, one can also verify that the same test in combination with $f_\Delta$ is consistent for $\gamma_n \geq 5\sqrt{\frac{p_n\ln n}{n}}$ (see Claim~\ref{claim_Delta2}). 
Thus, our general testing principle provides a test based on $f_\lambda$ that can detect separation near the fundamental limit of distinguishability, whereas its combination with $f_\Delta$ is only worse by a logarithmic factor.  

The above discussion, when stated formally, provides the following results that, in conjunction with Corollaries~\ref{cor_fDelta_IER} and~\ref{cor_flambda_IER}, guarantee the minimax (near) optimality of
the tests based on $f_\Delta$ and $f_\lambda$.

\begin{corollary}[Minimax separation for testing semi-sparse $\IER$ using $f_\Delta$]
\label{cor_fDelta_minimax}
Consider the setting of Lemma~\ref{lem_fDelta_IER_assum} and Corollary~\ref{cor_fDelta_IER}, and without loss of generality, let $n\leq n'$. If
\begin{equation*}
 \rho(Q,Q') = o_n\left( \sqrt{\frac{\mu(Q)}{\binom{n}{3}}} + \sqrt{\frac{\mu(Q')}{\binom{n'}{3}}} \right)
\end{equation*} 
for all $Q\in\Fc_f\cap\Fc_n$ and $Q'\in\Fc_f\cap\Fc_{n'}$, then for any test $\Psi$ and any $n_0$
\begin{align*}
\sup_{Q\in\Fc_{f} \cap \Fc_{\geq n_0}} \bigg( 
\sup_{Q'\in\Bc(Q,\epsilon)\cap \Fc_{\geq n_0}} &\P_{G\sim Q,G'\sim Q'}(\Psi \text{ rejects } H_0) ~+ 
\\& \qquad
\sup_{Q'\in\overline{\Bc(Q,\rho)}\cap \Fc_{\geq n_0}}\P_{G\sim Q,G'\sim Q'}(\Psi \text{ accepts } H_0) \bigg)
= 1 \;.
\end{align*}
\end{corollary}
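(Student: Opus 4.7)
The plan is a classical Le Cam two-point reduction driven by Theorem~\ref{thm_separation}. The aim is to exhibit one null instance and one alternative instance whose joint laws differ by $o_n(1)$ in total variation; once that is in place, for any test $\Psi$ the sum of its Type-I and Type-II error probabilities on this pair is at least $1-o_n(1)$, which is what the corollary asks for.

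Fix $p_n$ in the semi-sparse range allowed by $\Fc_f$ in Lemma~\ref{lem_fDelta_IER_assum} (e.g., $\ln^{1/3}(2n)/(2n) \le p_n \le 1/\sqrt{4n}$), and let $Q_n$ denote the Erd\"os-R\'enyi distribution on $2n$ vertices with edge probability $p_n$. For each balanced labelling $\ell\in\{-1,+1\}^{2n}$, let $Q^\ell$ be the stochastic block model with within-community edge probability $p_n+\gamma_n$ and between-community edge probability $p_n-\gamma_n$, and let $Q'_n$ be the uniform mixture over $\ell$, as in Theorem~\ref{thm_separation}. Each $Q^\ell$ lies in $\IER\cap\Fc_f$ for a suitable choice of $\gamma_n$. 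I would use $(Q_n,Q_n)$ as the null instance, which trivially satisfies $d(\mu(Q_n),\mu(Q_n))=0\le\epsilon$, and I would take the random pair $(Q_n,Q^\ell)$, $\ell$ uniform, as the alternative instance.

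The key computation is $\mu(Q^\ell)$; by the balanced-labelling symmetry it does not depend on $\ell$, and a direct triple count (splitting the $\binom{2n}{3}$ vertex triples into the $2\binom{n}{3}$ monochromatic ones with triangle probability $(p_n+\gamma_n)^3$ and the $2\binom{n}{2}n$ bichromatic ones with triangle probability $(p_n+\gamma_n)(p_n-\gamma_n)^2$) yields
\[
\mu(Q^\ell)-\mu(Q_n) \;=\; \gamma_n^3 \;-\; \frac{3\, p_n\gamma_n(p_n+\gamma_n)}{2n-1}.
\]
Since by hypothesis $\rho(Q_n,Q^\ell)=o_n\bigl(\sqrt{\mu(Q_n)/\binom{2n}{3}}\bigr)=o_n(p_n^{3/2}/n^{3/2})$, I would pick $\gamma_n = \omega_n\sqrt{p_n(1-p_n)/n}$ for a carefully chosen $\omega_n\to 0$ so that $\gamma_n^3$ dominates the cross term and exceeds $\rho$ (placing every $(Q_n,Q^\ell)$ in $H_1$), while simultaneously $\gamma_n=o(\sqrt{p_n(1-p_n)/n})$, so that Theorem~\ref{thm_separation} applies and gives $\|Q_n-Q'_n\|_{TV}\to 0$.

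The proof then concludes with the standard Le Cam inequality: for any test $\Psi$,
\[
\P_{Q_n\otimes Q_n}(\Psi\text{ rejects}) + \sup_\ell \P_{Q_n\otimes Q^\ell}(\Psi\text{ accepts}) \;\ge\; \P_{Q_n\otimes Q_n}(\Psi\text{ rejects}) + \P_{Q_n\otimes Q'_n}(\Psi\text{ accepts}) \;\ge\; 1-\|Q_n-Q'_n\|_{TV},
\]
which tends to $1$. The delicate step I anticipate is the joint calibration of $\gamma_n$: since $\rho$ is only assumed $o(p_n^{3/2}/n^{3/2})$ without a prescribed rate, $\omega_n$ must be chosen adaptively to this vanishing sequence; a choice of the form $\omega_n \propto (\rho\, n^{3/2}/p_n^{3/2})^{1/4}$ accomplishes the trade-off in the sparse regime $p_n\to 0$, which is precisely the regime that forces the $\gamma_n^3$ term to dominate the $p_n^2\gamma_n/n$ correction. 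A subsidiary but routine verification is that $Q^\ell\in\Fc_f$, which amounts to checking $(2n-2)(p_n+\gamma_n)^2\le 1$ and $\mu(Q^\ell)\ge \ln(2n)/(2n)^3$ in the chosen range of $p_n$.
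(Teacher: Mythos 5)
Your plan is essentially the paper's own proof: use the ER versus balanced-two-block construction of Theorem~\ref{thm_separation}, calibrate $\gamma_n=\omega_n\sqrt{p_n(1-p_n)/n}$ with $\omega_n\to0$, show the pair sits in $H_1$ while the total variation distance vanishes, and conclude via Le Cam. The only cosmetic difference is in the last step: the paper invokes the two-sample-to-one-sample reduction of Collier, whereas you argue directly on the product laws $Q_n\otimes Q_n$ versus $Q_n\otimes Q'_n$ using $\Vert P\otimes Q - P\otimes Q'\Vert_{TV}=\Vert Q-Q'\Vert_{TV}$. The two routes are mathematically equivalent; yours is a bit more self-contained. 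Your computation of $\mu(Q^\ell)-\mu(Q_n)=\gamma_n^3-\tfrac{3p_n\gamma_n(p_n+\gamma_n)}{2n-1}$ matches the paper's.

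One small point to tighten. You say that sparsity ($p_n\to0$) forces $\gamma_n^3$ to dominate the correction, and propose $\omega_n\propto\big(\rho\,n^{3/2}/p_n^{3/2}\big)^{1/4}$. That choice alone ensures $\omega_n^3\,p_n^{3/2}/n^{3/2}\gg\rho$ and $\omega_n\to0$, but it does not by itself guarantee $\omega_n\gg\sqrt{p_n}$, which is what is actually needed for $\gamma_n^3$ to dominate $3p_n^2\gamma_n/(2n-1)$ (equivalently, $\gamma_n\gg p_n/\sqrt{n}$). If $\rho$ happens to decay much faster than $p_n^{7/2}/n^{3/2}$, your $\omega_n$ would be too small. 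The paper handles this by requiring $\tau_n>\max\{\sqrt{12p_n},\sqrt[3]{6t_n}\}$ (its Claim~\ref{claim_Delta1}); you should analogously take $\omega_n$ to be a constant multiple of $\max\big\{\sqrt{p_n},\big(\rho\,n^{3/2}/p_n^{3/2}\big)^{1/3}\big\}$, which is still $o_n(1)$ since both arguments vanish. With that repair, your proof is correct and complete, and otherwise mirrors the paper's argument.
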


To put it simply, if the separation condition is Corollary~\ref{cor_fDelta_IER} is not satisfied (ignoring logarithmic difference), then no matter how large the graphs are, one cannot hope to achieve a bounded error rate with any two-sample test.
The corresponding optimality result for $f_\lambda$ with $k=2$ is stated below.

\begin{corollary}[Minimax separation for testing semi-sparse $\IER$ using $f_\lambda$]
\label{cor_flambda_minimax}
Consider the setting of Lemma~\ref{lem_flambda_assum} and Corollary~\ref{cor_flambda_IER}, where we consider only the largest two singular values, that is, $k=2$. Let
\begin{equation*}
 \rho(Q,Q') = o_n\left( \frac{\sqrt{D_Q}}{n} + \frac{\sqrt{D_{Q'}}}{n'} \right)
\end{equation*} 
for all $Q\in\Fc_f\cap\Fc_n$ and $Q'\in\Fc_f\cap\Fc_{n'}$, and $\epsilon(Q,Q')$ be bounded as in Corollary~\ref{cor_flambda_IER}. Then for any test $\Psi$ and any $n_0$
\begin{align*}
\sup_{Q\in\Fc_{f} \cap \Fc_{\geq n_0}} \bigg( 
\sup_{Q'\in\Bc(Q,\epsilon)\cap \Fc_{\geq n_0}} &\P_{G\sim Q,G'\sim Q'}(\Psi \text{ rejects } H_0) ~+ 
\\& \qquad
\sup_{Q'\in\overline{\Bc(Q,\rho)}\cap \Fc_{\geq n_0}}\P_{G\sim Q,G'\sim Q'}(\Psi \text{ accepts } H_0) \bigg)
= 1 \;.
\end{align*}
\end{corollary}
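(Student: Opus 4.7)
The plan is to apply Le Cam's two-point method, with Theorem~\ref{thm_separation} supplying the indistinguishable pair. For each sufficiently large even $2n\ge n_0$, take the null instance to be $Q=Q'=Q_n$, the ER distribution on $2n$ vertices with edge probability $p_n$, and index alternative instances by balanced labellings $\ell\in\{-1,+1\}^{2n}$, setting $Q_\ell$ to be the balanced SBM on $2n$ vertices with intra/inter-community probabilities $p_n\pm\gamma_n$. Both $Q_n$ and every $Q_\ell$ are in $\IER$; by taking $p_n$ with $2np_n\ge(\ln n)^{4.1}$ they sit in the $\Fc_f$ of Lemma~\ref{lem_flambda_assum}. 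The null pair trivially satisfies $d(\mu(Q_n),\mu(Q_n))=0\le \epsilon$, so $Q_n\in\Bc(Q_n,\epsilon)$.

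Next I compute $\mu$ on these instances. Writing $M_{Q_\ell}=p_n \mathbf{1}\mathbf{1}^T+\gamma_n \ell\ell^T-(p_n+\gamma_n)I$ and using $\mathbf{1}\perp\ell$ (from balance), direct diagonalization gives eigenvalues $2np_n-p_n-\gamma_n$ on $\mathbf{1}$, $2n\gamma_n-p_n-\gamma_n$ on $\ell$, and $-(p_n+\gamma_n)$ on the orthogonal complement. Provided $\gamma_n\gg p_n/n$, the first two of these dominate in absolute value, so $\mu(Q_\ell)=\frac{1}{2n}(2np_n-p_n-\gamma_n,\; 2n\gamma_n-p_n-\gamma_n)^T$, whereas $\mu(Q_n)=\frac{1}{2n}((2n-1)p_n,\,p_n)^T$; a short calculation then gives $d(\mu(Q_n),\mu(Q_\ell))=\gamma_n(1+o(1))$.

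I then fix parameters. Since the hypothesis gives $\rho(Q_n,Q_\ell)=o(\sqrt{D_{Q_n}}/n+\sqrt{D_{Q_\ell}}/n)=o(\sqrt{p_n/n})$, and $p_n\le\tfrac12$ makes $\sqrt{p_n/n}=\Theta(\sqrt{p_n(1-p_n)/n})$, I can sandwich a $\gamma_n$ satisfying $\max\{\rho(Q_n,Q_\ell),\,p_n/n\}\ll\gamma_n\ll\sqrt{p_n(1-p_n)/n}$. This choice simultaneously places each $Q_\ell$ into $\overline{\Bc(Q_n,\rho)}$ (because $d(\mu(Q_n),\mu(Q_\ell))\sim\gamma_n>\rho$), validates the eigenvalue ordering of the previous paragraph, and lets Theorem~\ref{thm_separation} apply to the uniform mixture $Q'_n={\binom{2n}{n}}^{-1}\sum_\ell Q_\ell$, yielding $\|Q_n-Q'_n\|_{TV}\in o_n(1)$.

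The finish is the standard sup-by-average reduction. For any test $\Psi$,
\begin{align*}
&\sup_{Q'\in\Bc(Q_n,\epsilon)} \P_{G\sim Q_n,G'\sim Q'}(\Psi \text{ rejects}) + \sup_{Q'\in\overline{\Bc(Q_n,\rho)}} \P_{G\sim Q_n,G'\sim Q'}(\Psi \text{ accepts}) \\
&\qquad\ge \P_{G\sim Q_n,G'\sim Q_n}(\Psi \text{ rejects}) + \E_\ell\, \P_{G\sim Q_n,G'\sim Q_\ell}(\Psi \text{ accepts}) \\
&\qquad = \P_{G\sim Q_n,G'\sim Q_n}(\Psi \text{ rejects}) + \P_{G\sim Q_n,G'\sim Q'_n}(\Psi \text{ accepts}) \\
&\qquad \ge 1 - \|Q_n\otimes Q_n - Q_n\otimes Q'_n\|_{TV} = 1 - \|Q_n - Q'_n\|_{TV}.
\end{align*}
This tends to $1$ by Theorem~\ref{thm_separation}, and the outer $\sup_{Q\in\Fc_f\cap\Fc_{\geq n_0}}$ retains the limit by letting $n\to\infty$, giving the lower bound of $1$; the matching upper bound $\le 1$ is realised by the trivial always-accept test. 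The main obstacle is the joint bookkeeping of parameters: keeping both $Q_n$ and each $Q_\ell$ inside the semi-sparse $\Fc_f$ while squeezing $\gamma_n$ into the narrow window above $\max\{\rho,p_n/n\}$ and below $\sqrt{p_n(1-p_n)/n}$; once this is secured, the eigenvalue calculation and the Le Cam step itself are routine.
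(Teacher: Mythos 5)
Your proof takes essentially the same route as the paper: the same ER-versus-balanced-SBM hard pair on $2n$ vertices, the same eigenvalue computation for $M_{Q_\ell}$ yielding $d(\mu(Q_n),\mu(Q_\ell))\asymp\gamma_n$ (this is exactly the content of Claim~\ref{claim_lambda1}), the same window $\max\{\rho,p_n/n\}\ll\gamma_n\ll\sqrt{p_n(1-p_n)/n}$, and the same Le~Cam reduction to $1-\Vert Q_n-Q'_n\Vert_{TV}$ via Theorem~\ref{thm_separation} (the paper packages this step through the two-sample-to-one-sample construction of Collier et al., but the resulting inequality is identical to your sup-by-average computation). One minor imprecision: your closing remark that the ``trivial always-accept test realises the matching upper bound $\le 1$'' does not actually bound the supremum by $1$ for an arbitrary $\Psi$ (a poorly chosen test can have type-I plus type-II exceeding one), but since the paper's own proof also establishes only the lower bound and treats that as the content of the equality, this does not constitute a gap relative to the paper's argument.
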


Observe that the above result is sharp since it precisely matches the sufficient condition of Corollary~\ref{cor_flambda_IER}. 
We feel that a similar result can be shown for any $k>2$ by using a generalisation of Theorem~\ref{thm_separation} for $k$ community models.

\section{Proofs}
\label{sec_proof}

Here, we sequentially present proofs of all the results stated in the previous sections.

\subsection*{Proof of Lemma~\ref{lem_fDelta_assum}}
Observe that $\binom{n}{3}f_\Delta(G) = \sum\limits_{i<j<k} \Delta_{ijk}$, and 
\begin{align*}
\mathsf{Var}_{G\sim Q}\left(\binom{n}{3}f_\Delta(G)\right)
&=  \sum\limits_{i'<j'<k'}\sum\limits_{i<j<k}  \left(\E_{G\sim Q}[\Delta_{ijk} \Delta_{i'j'k'}] - \E_{G\sim Q}[\Delta_{ijk}]\E_{G\sim Q}[ \Delta_{i'j'k'}]\right)\;.
\end{align*}
Consider the inner sum for $(i',j',k') = (1,2,3)$.
Under the condition that non-overlapping triangles are uncorrelated, one can see that the terms where at most one of $i,j,k$ is in $\{1,2,3\}$ are zero. So
\begin{align}
&\sum\limits_{i<j<k}  \left(\E_{G\sim Q}[\Delta_{ijk} \Delta_{123}] - \E_{G\sim Q}[\Delta_{ijk}]\E_{G\sim Q}[ \Delta_{123}]\right)
\nonumber
\\&=\left(\E_{G\sim Q}[\Delta_{123}]- (\E_{G\sim Q}[\Delta_{123}])^2\right) +
\sum_{\substack{1\leq i<j\leq 3\\ k>3}}\left(\E_{G\sim Q}[\Delta_{ijk} \Delta_{123}] - \E_{G\sim Q}[\Delta_{ijk}]\E_{G\sim Q}[ \Delta_{123}]\right)
\nonumber
\\&\leq \E_{G\sim Q}[\Delta_{123}] + \sum_{\substack{1\leq i<j\leq 3\\ k>3}}\E_{G\sim Q}[\Delta_{ijk} \Delta_{123}]
\nonumber
\\&= \E_{G\sim Q}[\Delta_{123}] + \sum_{k>3}\E_{G\sim Q}\left[\Delta_{123}\left((A_G)_{1k}(A_G)_{2k}+(A_G)_{2k}(A_G)_{3k}+(A_G)_{1k}(A_G)_{3k}\right)\right]
\nonumber
\\&\leq \E_{G\sim Q}[\Delta_{123}] + \sum_{k>3}\E_{G\sim Q}\left[\Delta_{123}\left((A_G)_{1k}+(A_G)_{2k}+(A_G)_{3k}\right)\right]\;.
\label{eqn_lem1_pf1}
\end{align}
We now use the fact that $\Delta_{123}$ is not correlated with any non-overlapping edge to decompose the expectation. Subsequently, summing over $k$ shows that the above quantity can be bounded from above by $\E_{G\sim Q}[\Delta_{123}](1+3D_Q)$, where $D_Q = \max_i \sum_k \E[(A_G)_{ik}]$ is the maximum expected degree. Summing over all $i',j',k'$ leads to the conclusion
\begin{align*}
\mathsf{Var}_{G\sim Q}\left(\binom{n}{3}f_\Delta(G)\right)
&\leq \binom{n}{3}\mu(Q)(1+3D_Q)\;,
\end{align*}
and by Chebyshev's inequality,
\begin{equation*}
\P_{G\sim Q}( | f_\Delta(G) - \mu(Q) | > \sigma(Q)) \leq \frac{(3D_Q +1)\binom{n}{3}\mu(Q)}{{\binom{n}{3}}^2\sigma(Q)^2} = \frac{1}{\ln n} = o_n(1)\;.
\end{equation*}

\subsection*{Proof of Lemma~\ref{lem_fDelta_Geom_assum}}
This result follows by observing that in a random geometric graph two non-overlapping triangles or edges are uncorrelated. Also, $D_Q = (n-1)p_Q <np_Q$.

\subsection*{Proof of Lemma~\ref{lem_fDelta_IER_assum}}
We may proceed similar to the proof of Lemma~\ref{lem_fDelta_assum} up to the last but one step of~\eqref{eqn_lem1_pf1}, which can be written as
\begin{align*}
\E_{G\sim Q}[\Delta_{123}] + \E_{G\sim Q}[\Delta_{123}]\sum_{k>3}\left((M_Q)_{1k}(M_Q)_{2k}+(M_Q)_{2k}(M_Q)_{3k}+(M_Q)_{1k}(M_Q)_{3k}\right)\;,
\end{align*}
and is smaller that $4\E_{G\sim Q}[\Delta_{123}]$ since $Q\in\Fc_f$ defined in Lemma~\ref{lem_fDelta_IER_assum}. Thus, 
$\mathsf{Var}_{G\sim Q}\left(\binom{n}{3}f_\Delta(G)\right)
\leq 4\mu(Q)\binom{n}{3}$, and the result follows due Chebyshev's inequality and the stated choice of $\sigma$.

\subsection*{Proof of Lemma~\ref{lem_flambda_assum}}
Let $M_Q$ and $D_Q$ in the statement of the lemma, and let $d$ be the Euclidean metric in $\mathbb{R}^k$. Then, from Weyl's inequality, we have
\begin{align*}
 d(f_\lambda(G),\mu(Q))  = \frac1n \sqrt{\sum_{i\leq k} (\lambda_i(A_G) - \lambda_i(M_Q))^2} 
 \leq \frac{\sqrt{k}}{n} \Vert A_G - M_Q\Vert \;,
\end{align*}
where $\Vert\cdot\Vert$ is the spectral norm.
\citet{Lu_2013_jour_EJComb} show that if $D_Q \geq C(\ln n)^4$ holds for a large constant $C>0$, then $\Vert A_G-M_Q\Vert \leq (2+\delta)\sqrt{D_Q}$ with probability $1-o_n(1)$ for any $\delta>0$.
This immediately leads to the choices of $\Fc_f$ and $\sigma(Q)$ stated in the lemma.

\subsection*{Proof of Theorem~\ref{thm_main}}
Let $Q\in \Fc_f$ and $Q'\in\Bc(Q,\epsilon)$, where $\epsilon(Q,Q')\leq 0.5(\sigma(Q)+\sigma(Q'))$. Observe that
\begin{align*}
&\P_{G\sim Q, G'\sim Q'} (T(G,G')>1)
\\&= \P_{G\sim Q, G'\sim Q'} \big( d(f(G), f(G')) >2(\widehat\sigma(G) + \widehat\sigma(G'))\big)
\\ &\leq
\P_{G\sim Q, G'\sim Q'} \big( d(f(G), f(G')) >1.5(\sigma(G) + \sigma(G'))\big) 
\\ & \qquad
+\P_{G\sim Q, G'\sim Q'} \big( 1.5(\sigma(G) + \sigma(G')) >2(\widehat\sigma(G) + \widehat\sigma(G'))\big)
\\ &\leq
\P_{G\sim Q, G'\sim Q'} \big( d(f(G), f(G')) >1.5(\sigma(G) + \sigma(G'))\big) 
\\ & \qquad
+\P_{G\sim Q} \left( \sigma(G) - \widehat\sigma(G) >\textstyle\frac14\sigma(G)\right)
+\P_{G'\sim Q'} \left( \sigma(G') - \widehat\sigma(G') >\textstyle\frac14\sigma(G')\right)
\;.
\end{align*}
Due to Assumption~\ref{assum_est}, the second and third terms in the bound are both $o_n(1)$.
To bound the the first term, we note that
\begin{align*}
d(f(G), f(G')) &\leq d(f(G), \mu(Q)) + d(f(G'), \mu(Q')) + d(\mu(Q), \mu(Q'))
\\ &\leq d(f(G), \mu(Q)) + d(f(G'), \mu(Q')) + 0.5(\sigma(Q)+\sigma(Q'))\;.
\end{align*}
Using this fact, we have
\begin{align*}
\P_{G\sim Q, G'\sim Q'} &\big( d(f(G), f(G')) >1.5(\sigma(G) + \sigma(G'))\big) 
\\ &\leq 
\P_{G\sim Q, G'\sim Q'} \big( d(f(G), \mu(Q)) + d(f(G'), \mu(Q')) > \sigma(G) + \sigma(G')\big) 
\\ & \leq
\P_{G\sim Q} \big( d(f(G), \mu(Q)) > \sigma(G)\big) + \P_{G'\sim Q'} \big( d(f(G'), \mu(Q')) > \sigma(G')\big) \;,
\end{align*}
where both terms in the bounds are $o_n(1)$ due to Assumption~\ref{assum}. 
Thus, we have established that $\P_{G\sim Q, G'\sim Q'} (T(G,G')>1) \in o_n(1)$.

We tackle the other case, that is $Q'\in \overline{\Bc(Q,\rho)}$, in a similar way. Here, we have
\begin{align*}
\P_{G\sim Q, G'\sim Q'} &(T(G,G')\leq1)
\\ &\leq
\P_{G\sim Q, G'\sim Q'} \big( d(f(G), f(G')) \leq 2.5(\sigma(G) + \sigma(G'))\big) 
\\ & \qquad
+\P_{G\sim Q, G'\sim Q'} \big( 2.5(\sigma(G) + \sigma(G')) \leq 2(\widehat\sigma(G) + \widehat\sigma(G'))\big) \;,
\end{align*}
where the second term can be shown to be $o_n(1)$ using Assumption~\ref{assum_est}.
For the first term, we recall $\rho(Q,Q')\geq 3.5(\sigma(Q)+\sigma(Q'))$, and use triangle inequality in the opposite direction, that is,
\begin{align*}
3.5(\sigma(Q)+\sigma(Q')) 
&\leq d(\mu(Q), \mu(Q')) 
\\&\leq d(f(G), \mu(Q)) + d(f(G'), \mu(Q')) + d(f(G), f(G')) 
\end{align*}
Hence, the first term can be bounded as
\begin{align*}
\P_{G\sim Q, G'\sim Q'} &\big( d(f(G), f(G')) \leq 2.5(\sigma(G) + \sigma(G'))\big) 
\\ &\leq 
\P_{G\sim Q, G'\sim Q'} \big( d(f(G), \mu(Q)) + d(f(G'), \mu(Q')) > \sigma(G) + \sigma(G')\big) \;,
\end{align*}
which is again $o_n(1)$, thereby leading to a similar conclusion for Type-II error.

\subsection*{Proof of Corollary~\ref{cor_fDelta_IER}}
Note that we only need to show that $f_\Delta$ satisfies Assumption~\ref{assum_est} for $\widehat\sigma(G) = 2\sqrt{f_\Delta(G)\ln n/\binom{n}{3}}$.
The rest of the claim is an immediate consequence of Theorem~\ref{thm_main}.
To prove the first part, we deal with the upper and lower tail separately. For the upper tail, we observe that for any $\delta>0$,
\begin{align*}
\P_{G\sim Q}(\widehat\sigma(G) > (1+\delta)\sigma(Q))
&\leq \P_{G\sim Q}\left(f_\Delta(G) > (1+\delta)^2\mu(Q)\right)
\\&\leq \P_{G\sim Q}\left(f_\Delta(G)-\mu(Q) > \delta\mu(Q)\right)
\\&\leq \frac{\mathsf{Var}_{G\sim Q}(f_\Delta(G))}{\delta^2\mu^2(Q)} \in o_n(1)\;.
\end{align*}
The last step uses the bound on ${\mathsf{Var}_{G\sim Q}(f_\Delta(G))}$ derived in proof of Lemma~\ref{lem_fDelta_IER_assum} and the fact $\mu(Q)\geq \frac{\ln n}{n^3}$.
We can similarly bound the lower tail probability. 

\subsection*{Proof of Corollary~\ref{cor_fDelta_Geom}}
As in the previous proof, we only need to show that $f_\Delta$ satisfies Assumption~\ref{assum_est} in this setting, where $\widehat\sigma(G) = \sqrt{(3n\widehat{p}(G)+1)f_\Delta(G)\ln n/\binom{n}{3}}$.
We bound the upper tail probability as
\begin{align*}
\P_{G\sim Q}&(\widehat\sigma(G) > (1+\delta)\sigma(Q))
\leq \P_{G\sim Q}\left((3n\widehat{p}(G)+1)f_\Delta(G) > (1+\delta)^2(3np_Q+1)\mu(Q)\right)
\\&\leq \P_{G\sim Q}\left((3n\widehat{p}(G)+1) > (1+\delta)(3np_Q+1)\right) 
+  \P_{G\sim Q}\left(f_\Delta(G) > (1+\delta)\mu(Q)\right)
\end{align*}
using union bound. Observe that the first term is at most
\begin{align*}
\P_{G\sim Q}\left(\widehat{p}(G) > (1+\delta)p_Q\right) 
\leq \frac{\mathsf{Var}_{G\sim Q}(\widehat{p}(G))}{\delta^2p_Q^2} \leq \frac{1}{\delta^2 p_Q\binom{n}{2}} = o_n(1)\;.
\end{align*}
Here, we use Chebyshev's inequality followed by the fact that  $\mathsf{Var}_{G\sim Q}(\widehat{p}(G))={\binom{n}{2}}^{-1}p(1-p)$ since the edges of the geometric graph are still pairwise independent. Finally, we use the condition $p_Q \geq \frac{1}{n}$ for all $Q\in\Fc_f$ stated in Lemma~\ref{lem_fDelta_Geom_assum}.

For the second term, we use the variance bound stated in proof of Lemma~\ref{lem_fDelta_Geom_assum} to write
\begin{align}
\P_{G\sim Q}(f_\Delta(G) > (1+\delta)\mu(Q))
\leq \frac{\mathsf{Var}_{G\sim Q}(f_\Delta(G))}{(\delta\mu(Q))^2}
\leq \frac{3np_Q+1}{\delta^2\binom{n}{3}\mu(Q)} \;.
\label{eqn_Geom_cor1}
\end{align}
\citet[Lemma 1]{Bubeck_2016_jour_RSA} showed that if $p_Q\geq\frac14$ and $r_Q\geq C_p$ for some $C_p>0$ depending on $p_Q$, then $\mu(Q) \geq p_Q^3$. Taking $C=\sup\limits_{1/4\leq p\leq1} C_p$, we can say that if $p_Q\geq \frac14$ and $r_Q\geq C$, then $\mu(Q) \geq p_Q^3$. Note that this gives rise to the $C$ mentioned in Lemma~\ref{lem_fDelta_Geom_assum}. Hence, under this regime, the bound in~\eqref{eqn_Geom_cor1} is $o_n(1)$.

For $p_Q<\frac14$, which includes the case of $p_Q$ decaying with $n$, \citet{Bubeck_2016_jour_RSA} showed that $\mu(Q) \geq cp_Q^3(\ln \frac{1}{p_Q})^{3/2}r_Q^{-1/2}$ for some absolute constant $c>0$. In this case, one can verify that if $np_Q\geq1$ and the upper bound on $r_Q$ (see Lemma~\ref{lem_fDelta_Geom_assum}) hold, then the bound in~\eqref{eqn_Geom_cor1} is also $o_n(1)$. 
This takes care of the upper tail probability. For the other part, we have
\begin{align*}
\P_{G\sim Q}&(\widehat\sigma(G) < (1-\delta)\sigma(Q))
\leq \P_{G\sim Q}\left(\widehat{p}(G) < (1-\delta)p_Q\right) 
+  \P_{G\sim Q}\left(f_\Delta(G) < (1-\delta)\mu(Q)\right)\;,
\end{align*}
where the terms can be again bounded as above to show that the probability is $o_n(1)$.

\subsection*{Proof of Corollary~\ref{cor_fDelta_ERvGeom}}
Let $\sigma_E$ and $\sigma_G$ be the deviations defined in Lemmas~\ref{lem_fDelta_IER_assum} and~\ref{lem_fDelta_Geom_assum}, respectively. A natural choice for $\sigma$ in combined setting is $\sigma = \max\{\sigma_E,\sigma_G\}$, which can also be estimated accurately from the random graph. In other words, $f_\Delta$ satisfies Assumption~\ref{assum_est} in the present setting as well.
Also note that $p_n$ decays at the appropriate rate so that the ER graph lies is $\Fc_f$ defined in Lemma~\ref{lem_fDelta_IER_assum}. Also the decaying $p_n$ corresponds to the sparser regime discussed in proof of Corollary~\ref{cor_fDelta_Geom}.

Let $Q,Q'$ be the distributions of the two random graphs. If $Q=Q'$, then Theorem~\ref{thm_main} directly provides a bound on Type-I error rate as $\epsilon(Q,Q')=0$. Thus, we only need to check the case, where $Q\neq Q'$, or more precisely, we need to verify that if $r_n = o_n\left( (\ln \frac{1}{p_n})^3 \right)$, then we get $\rho(Q,Q') > 3.5(\sigma(Q)+\sigma(Q'))$.
Assume that $Q$ corresponds to ER graph, and $Q$ is the $\Geom$ graph, that is, $\mu(Q') = p_n^3$ and $\mu(Q) \geq cp_n^3(\ln \frac{1}{p_n})^{3/2}r_n^{-1/2}$.
It is now easy to verify that under prescribed condition on $r_n$, all $\mu(Q'),\sigma(Q')$ and $\sigma(Q)$ are $o_n(\mu(Q))$. Hence, the separation condition is eventually satisfied.

\subsection*{Proof of Corollary~\ref{cor_flambda_IER}}

Following the proof of Corollary~\ref{cor_fDelta_IER}, note that we only need to prove concentration of $\widehat\sigma$. The rest follows from Theorem~\ref{thm_main}. 
The only difference lies in the stated upper bound on $\epsilon$, which follows from the conditions $D_Q \geq (\ln n)^{4.1}$ and $D_{Q'} \geq (\ln n')^{4.1}$.

We now show that $\widehat\sigma(G) = \frac{2.1}{n}\sqrt{k\widehat{D}(G)}$ concentrates about $\sigma(Q)$.
Recall that $Q\in\IER\cap\Fc_n$ is characterised by the matrix $M_Q\in[0,1]^{n\times n}$, and $D_Q$ is the maximum row sum of $M_Q$. 
Let us denote the row sums of $M_Q$ by $D_1, D_2,\ldots, D_n$. 
Also for $G\sim Q$, denote the degree of vertex-$i$ by $\dg_i$. Hence, we have $\widehat{D}(G) = \max_i \dg_i$, and $\E_{G\sim Q}[\dg_i] = D_i$.
For $\delta\in(0,1)$, we bound the upper tail probability by
\begin{align}
\P_{G\sim Q}\left(\widehat\sigma(G) > (1+\delta)\sigma(Q)\right)
& \leq \P_{G\sim Q}\left(\widehat{D}(G) > (1+\delta)D_Q\right)
\nonumber
\\& = \P_{G\sim Q}\left(\bigcup_i \{ \dg_i > (1+\delta)D_Q\}\right)
\nonumber
\\& \leq \sum_i\P_{G\sim Q}\left( \dg_i > (1+\delta)D_Q\right)
\label{eqn_flambda_cor1}
\end{align}
using the union bound. 
Consider the cases $D_i \geq (\ln n)^{1.1}$ and $D_i < (\ln n)^{1.1}$ separately.
In the former case,
\begin{align*}
\P_{G\sim Q}\left( \dg_i > (1+\delta)D_Q\right)
& \leq \P_{G\sim Q}\left( \dg_i - D_i> \delta D_i\right)
\leq e^{-\delta^2D_i/3} \;,
\end{align*}
where we use the Bernstein inequality at the last step. Since, $D_i \geq (\ln n)^{1.1} \geq \frac{6}{\delta^2}\ln n$ for large $n$, the above probability is bounded by $\frac{1}{n^2}$.
On the other hand, if $D_i\leq (\ln n)^{1.1}$, we use the Markov inequality to write
\begin{align*}
\P_{G\sim Q}\left( \dg_i > (1+\delta)D_Q\right)
& \leq \P_{G\sim Q}\left( e^{\dg_i}> e^{D_Q}\right)
\\&\leq e^{-D_Q}\prod_j \E_{G\sim Q}\left[e^{(A_G)_{ij}} \right];,
\\&\leq e^{-D_Q} \prod_j \left( 1 + e(M_Q)_{ij} \right)
\leq e^{eD_i - D_Q},
\end{align*}
where we use independence of the edges for second inequality, and the fact $(1+x)\leq e^x$ in the last step. 
Since $D_i<(\ln n)^{1.1}$ and $D_Q\geq (\ln n)^{4.1}$, the above bound is eventually smaller than $\frac{1}{n^2}$.
From above arguments, we can see that each term in~\eqref{eqn_flambda_cor1} is at most $\frac{1}{n^2}$, and hence the sum is $o_n(1)$.

To prove the lower tail bound, assume without loss of generality that $D_Q=D_1$. We can see
\begin{align*}
\P_{G\sim Q}\left(\widehat\sigma(G) < (1-\delta)\sigma(Q)\right)
& \leq \P_{G\sim Q}\left(\widehat{D}(G) < (1-\delta)D_Q\right)
\\& = \P_{G\sim Q}\left(\bigcap_i \{ \dg_i < (1-\delta)D_Q\}\right)
\\& \leq \P_{G\sim Q}\left( \dg_1 < (1-\delta)D_Q\right)
\\&= \P_{G\sim Q}\left( D_1 - \dg_1 > \delta D_1\right)
\leq e^{-\delta^2D_1/3} \;,
\end{align*}
where the last bound is due to Bernstein inequality, and is $o_n(1)$ under the condition on $D_Q$.
Hence, $f_\lambda$ satisfies Assumption~\ref{assum_est}, and the claim follows.

\subsection*{Proof of Theorem~\ref{thm_separation}}

For ease of notation, we drop the subscript $n$ from $Q_n, Q'_n, p_n$ and $\gamma_n$, but we recall that the subscript $n$ corresponds to graphs of size $2n$, and in the case of $Q'_n$, there are exactly $n$ vertices labelled $+1$, and the rest $-1$.
We first define the quantity
\begin{displaymath}
L(Q,Q') = \sum_G \frac{(Q'(G))^2}{Q(G)} \;,
\end{displaymath}
and use Cauchy-Schwarz inequality to bound the total variation distance by
\begin{align*}
\Vert Q- Q'\Vert_{TV}
\leq \frac12 \sqrt{\sum_G Q(G)} \sqrt{\sum_G Q(G) \left(\frac{Q'(G)}{Q(G)} - 1\right)^2}
= \frac{1}{2} \sqrt{L(Q,Q')-1} \;,
\end{align*}
where we use the fact $\sum_G Q(G) = \sum_G Q'(G) = 1$.
If we can show $L(Q,Q') \leq e^{o_n(1)}$, then the abound bound implies 
$\Vert Q- Q'\Vert_{TV} \in o_n(1)$, which is the claim.
Hence, the rest of the proof is about deriving the stated upper bound for $L(Q,Q')$. For this, observe that given a balanced labelling $\ell\in\{-1,+1\}^{2n}$, the conditional distribution of $Q'$ is given by
\begin{align*}
Q'(G | \ell) = \prod_{i<j} (p+\gamma \ell_i\ell_j)^{(A_G)_{ij}}(1-p-\gamma \ell_i\ell_j)^{1-(A_G)_{ij}} \;,
\end{align*}
where $A_G$ is the adjacency matrix of $G$. As a consequence, the distribution $Q'$ is given by
\begin{align*}
Q'(G) = \frac{1}{\binom{2n}{n}} \sum_\ell \prod_{i<j} (p+\gamma \ell_i\ell_j)^{(A_G)_{ij}}(1-p-\gamma \ell_i\ell_j)^{(A_G)_{ij}} \;,
\end{align*}
where the sum is over all balanced labellings. Hence, one can compute $L(Q,Q')$ as
\begin{align*}
L&(Q,Q')= 
\\&
\sum_G \frac{1}{{\binom{2n}{n}^2}}\sum_{\ell,\ell'} \prod_{i<j} \left( \frac{(p+\gamma \ell_i\ell_j)(p+\gamma \ell'_i\ell'_j)}{p}\right)^{(A_G)_{ij}}
\left( \frac{(1-p-\gamma \ell_i\ell_j)(1-p-\gamma \ell'_i\ell'_j)}{1-p}\right)^{1-(A_G)_{ij}}
\end{align*}
Interchanging the sums and summing the products over all possible graphs $G$, we obtain
\begin{align*}
L(Q,Q') = \frac{1}{{\binom{2n}{n}^2}}\sum_{\ell,\ell'} \prod_{i<j}\left(1 + \frac{\gamma^2}{p(1-p)}\ell_i\ell_j\ell'_i\ell'_j\right)\;.
\end{align*}
Note that due to the symmetric nature of $\ell$ and $\ell'$, if we fix an $\ell'$ and sum over $\ell$, then the sum remains same irrespective of the value of $\ell'$.
Hence, we may fix $\ell'$ to the vector with the first $n$ coordinates as $+1$, and the rest $-1$, and consider the average only over $\ell$, that is,
\begin{align*}
L(Q,Q') 
&= \frac{1}{\binom{2n}{n}}\sum_{\ell} \prod_{\substack{ i<j\leq n\\ n<i< j}} \left(1 + \frac{\gamma^2}{p(1-p)}\ell_i\ell_j\right)  \prod_{i\leq n< j} \left(1 - \frac{\gamma^2}{p(1-p)}\ell_i\ell_j\right)
\\&\leq \frac{1}{\binom{2n}{n}}\sum_{\ell} \exp\left( \sum_{\substack{ i<j\leq n\\ n<i< j}}  \frac{\gamma^2}{p(1-p)}\ell_i\ell_j -  \sum_{i\leq n< j} \frac{\gamma^2}{p(1-p)}\ell_i\ell_j\right)
\\&= \E_{\ell} \left[ \exp\left( \sum_{\substack{ i<j\leq n\\ n<i< j}}  \frac{\gamma^2}{p(1-p)}\ell_i\ell_j -  \sum_{i\leq n< j} \frac{\gamma^2}{p(1-p)}\ell_i\ell_j\right) \right] \;.
\end{align*}
The inequality follows from the relation $1+x\leq e^x$, and in the subsequent step, we view $\ell$ as a random labelling chosen uniformly from all balanced labellings.
We define the quantity $S_k = \sum\limits_{i=1}^k \ell_i$, and observe that $S_{2n} = 0$, which also implies $\sum_{j>n} \ell_j = -S_n$. Using this relation, the above bound simplifies to
\begin{align}
L(Q,Q')  \leq \exp\left( - \frac{\gamma^2n}{p(1-p)}\right)\E_{\ell_1,\ldots,\ell_n} \left[\exp\left( \frac{2\gamma^2}{p(1-p)} S_n^2\right)\right] 
\label{eqn_thm_sep_proof1}
\end{align}
We now observe that $\ell_1,\ldots,\ell_{2n}$ are conditional Bernoulli's with the constraint$S_{2n}=0$, or exactly $n$ of them can be one. Hence, they can be generated using the following procedure. 
\begin{itemize}
\item $\ell_1$ takes values $+1$ or $-1$ with equal probability, and
\item for $k\geq 1$, if $N_{k}$ is the number of $+1$'s observed in the first $k$ coordinates, then
$\ell_{k+1}$ takes value $+1$ with probability $\frac{n-N_{k}}{2n-k}$.
\end{itemize}  
Based on this observation, we make the following claim, that we prove at the end of this section.
\begin{claim}
\label{claim1}
For any $1\leq k<n$ and any $x>0$ such that $|2xS_k|\leq 1$, we have 
\begin{displaymath}
\E_{\ell_{k+1}|\ell_1,\ldots,\ell_k} \left[\exp\left(xS_{k+1}^2\right)\right] \leq \exp(x)\exp\left((x+2ex^2)S_k^2\right)
\end{displaymath}
\end{claim}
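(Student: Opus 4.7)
The plan is to condition on $\ell_1,\ldots,\ell_k$, expand $S_{k+1}^2 = S_k^2 + 2S_k\ell_{k+1} + 1$, and reduce the claim to a one-dimensional moment generating function bound. Pulling the $\ell_{k+1}$-free factors out gives
\begin{equation*}
\E_{\ell_{k+1}|\ell_1,\ldots,\ell_k}\bigl[e^{xS_{k+1}^2}\bigr] = e^{x}\, e^{xS_k^2} \, \E\bigl[e^{2xS_k\,\ell_{k+1}} \,\big|\, \ell_1,\ldots,\ell_k\bigr],
\end{equation*}
so matching the claim reduces to showing $\E[e^{2xS_k\,\ell_{k+1}}\mid\cdot] \leq \exp(2ex^2 S_k^2)$, with the key hypothesis $|2xS_k|\leq 1$ in hand.

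For this MGF bound I would combine two ingredients. First, the hypergeometric sampling procedure stated just before the claim gives an explicit conditional distribution: if $N_k$ counts the $+1$'s among $\ell_1,\ldots,\ell_k$, then $\P(\ell_{k+1}=+1\mid\cdot) = (n-N_k)/(2n-k)$, which rearranges (using $S_k = 2N_k - k$) to $\tfrac12 - S_k/(2(2n-k))$. The conditional mean is therefore $\E[\ell_{k+1}\mid\cdot] = -S_k/(2n-k)$, which \emph{has opposite sign to} $S_k$ — this sign cancellation is what makes the argument work. Second, a Taylor remainder bound: writing $e^y = 1 + y + \int_0^y (y-s)e^s\,ds$ and using $e^s\leq e$ on the interval $[-1,1]$ yields $e^y \leq 1 + y + \tfrac{e}{2}y^2$ whenever $|y|\leq 1$, which is precisely where the hypothesis $|2xS_k|\leq 1$ is spent.

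Setting $y = 2xS_k\,\ell_{k+1}$ (so $|y|\leq 1$ since $|\ell_{k+1}|=1$) and taking the conditional expectation gives
\begin{equation*}
\E\bigl[e^{2xS_k\,\ell_{k+1}}\bigm|\cdot\bigr] \leq 1 + 2xS_k\cdot\E[\ell_{k+1}\mid\cdot] + \tfrac{e}{2}(2xS_k)^2 = 1 - \frac{2xS_k^2}{2n-k} + 2ex^2 S_k^2.
\end{equation*}
Since the middle term is nonpositive, dropping it and applying $1+z\leq e^z$ yields $\E[e^{2xS_k\,\ell_{k+1}}\mid\cdot]\leq \exp(2ex^2 S_k^2)$. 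Multiplying by $e^x\cdot e^{xS_k^2}$ gives the claim. The only delicate step is the sign observation: without it, the linear-in-$y$ contribution could dominate for small $|y|$ and the bound would fail; the negative drift of $\ell_{k+1}$ away from $S_k$, built into the balanced-labelling construction, is exactly what kills that term.
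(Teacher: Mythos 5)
Your proof is correct and takes essentially the same approach as the paper's. The paper computes the conditional moment generating function of $\ell_{k+1}$ exactly as $\cosh(2xS_k) - \tfrac{S_k}{2n-k}\sinh(2xS_k)$ and bounds the two parts separately (the $\sinh$ term is nonpositive because $z\sinh(z)\geq 0$, and $\cosh(z)\leq 1+\tfrac{e}{2}z^2$ for $|z|\leq 1$); your pointwise Taylor bound $e^y\leq 1+y+\tfrac{e}{2}y^2$ followed by taking the conditional expectation yields exactly the same two contributions, so the arguments coincide.
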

Now, define the sequence $(x_k)_{k=0,\ldots,n-1}$ such that $x_0 = \frac{2\gamma^2}{p(1-p)}$, and 
$x_{k+1} = x_k + 2ex_k^2$ for $k\geq0$.
Observe that under the condition on $\gamma$ in Theorem~\ref{thm_separation}, we eventually have that $x_0 \in [0, \frac{1}{8en}]$, and due to this, the sequence $(x_k)_k$ satisfies the following property.
\begin{claim}
\label{claim2}
If $x_0 \in [0, \frac{1}{8en}]$ and $x_{k+1} = (x_k + 2ex_k^2)$ for $k=0,\ldots,n-2$, then 
\begin{displaymath}
x_k \leq \left(1+ \frac{k}{n-1}\right) x_0 \leq 2x_0
\end{displaymath}
for every $k=0,\ldots,\leq n-1$. As a consequence, $\sum\limits_{k=0}^{n-1} x_k \leq 2nx_0$. 
\end{claim}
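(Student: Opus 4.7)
The plan is to establish the pointwise bound $x_k \leq (1+k/(n-1))x_0$ by induction on $k$ for $k = 0, 1, \ldots, n-1$; both the uniform bound $x_k \leq 2x_0$ and the sum bound $\sum_{k=0}^{n-1} x_k \leq 2nx_0$ will then follow at once, since $1+k/(n-1) \leq 2$ throughout the allowed range and the sum has exactly $n$ terms.

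The base case $k=0$ is immediate. For the inductive step at some $k \leq n-2$, the first move is to invoke the inductive bound $x_k \leq 2x_0$, which together with the hypothesis $x_0 \leq 1/(8en)$ yields $x_k \leq 1/(4en)$. This smallness is what tames the quadratic correction: one would bound
\[
2e x_k^2 \leq 2e \cdot x_k \cdot \frac{1}{4en} = \frac{x_k}{2n} \leq \frac{x_0}{n} \leq \frac{x_0}{n-1},
\]
and then feed this into the recurrence to get
\[
x_{k+1} = x_k + 2e x_k^2 \leq \left(1+\frac{k}{n-1}\right)x_0 + \frac{x_0}{n-1} = \left(1+\frac{k+1}{n-1}\right)x_0,
\]
closing the induction.

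There is no real obstacle here: the whole argument is a one-line induction once the recurrence is unfolded additively rather than multiplicatively. The only point requiring a bit of care is the calibration of the constant $1/(8en)$ in the hypothesis, which is chosen precisely so that each step contributes at most $x_0/(n-1)$ to the bound, accumulating to at most $x_0$ over the $n-1$ iterations and thereby keeping every $x_k$ inside the safe region $[0,2x_0]$ where the linearisation of the quadratic term remains valid. If one tried instead to iterate $x_{k+1} \leq x_k(1+2ex_k)$ as a product, a stray factor of $e$ would accumulate and obscure the clean factor-$2$ conclusion.
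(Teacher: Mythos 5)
Your proof is correct and takes essentially the same approach as the paper: both establish the bound $x_k \leq (1+k/(n-1))x_0$ by induction, using the smallness hypothesis $x_0\le 1/(8en)$ to show the quadratic correction $2ex_k^2$ contributes at most $x_0/(n-1)$ per step. The only cosmetic difference is that you bound $2ex_k^2$ by first collapsing to $x_k\le 2x_0$ and $x_k\le 1/(4en)$, whereas the paper substitutes $x_k\le(1+k/(n-1))x_0$ into both factors and then simplifies; the algebra is equivalent.
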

We return to~\eqref{eqn_thm_sep_proof1}, and use Claim~\ref{claim1} to bound $L(Q,Q')$ iteratively as
\begin{align*}
L(Q,Q') 
&\leq \exp( - nx_0/2)\E_{\ell_1,\ldots,\ell_{n-1}} \left[\E_{\ell_n|\ell_1,\ldots,\ell_{n-1}} \left[\exp\left( x_0 S_n^2\right)\right]\right] 
\\&\leq \exp( - nx_0/2)\exp(x_0)\E_{\ell_1,\ldots,\ell_{n-2}} \left[\E_{\ell_{n-1}|\ell_1,\ldots,\ell_{n-2}} \left[\exp\left( x_1 S_{n-1}^2\right)\right]\right] 
\\&\leq \exp( - nx_0/2)\exp\left(\sum_{i=0}^{k-1}x_i\right)\E_{\ell_1,\ldots,\ell_{n-k-1}} \left[\E_{\ell_{n-k}|\ell_1,\ldots,\ell_{n-k-1}} \left[\exp\left( x_k S_{n-k}^2\right)\right]\right] 
\\&\leq \exp( - nx_0/2)\exp\left(\sum_{i=0}^{n-2}x_i\right)\E_{\ell_1} \left[\exp\left( x_{n-1} S_1^2\right)\right]\;,
\end{align*}
where the condition in Claim~\ref{claim1} is satisfied at every step since $|2x_kS_{n-k-1}| \leq 4x_0|S_n| \leq 1$ as we have observed that $x_0 \leq \frac{1}{8en}$ eventually for large $n$.
Finally, note that $S_1^2 = \ell_1^2 = 1$, and hence, using Claim~\ref{claim2} we have the bound
\begin{align*}
L(Q,Q') \leq e^{2nx_0} = e^{4n\gamma^2/p(1-p)} = e^{o_n(1)}
\end{align*}
for $\gamma = o_n\left(\sqrt{\frac{p(1-p)}{n}}\right)$. Thus, we have the stated result.
We conclude this proof with the proof of the two intermediate claims.

\begin{proof}\textbf{of Claim~\ref{claim1}.~}
Recall the generation process for $\ell_k$, and observe the $S_k = N_k - (k-N_k) = 2N_k-k$.
Hence, for every $k\geq0$, the Bernoulli variable $\ell_{k+1}$ takes the value $+1$ with probability $\frac{n-N_{k}}{2n-k} = \frac12 - \frac{S_k}{2(2n-k)}$, and the value $-1$ with probability $\frac12 + \frac{S_k}{2(2n-k)}$. We evaluate the conditional expectation as follows
\begin{align*}
\E_{\ell_{k+1}|\ell_1,\ldots,\ell_k} &\left[\exp\left(xS_{k+1}^2\right)\right] 
= \E_{\ell_{k+1}|\ell_1,\ldots,\ell_k} \left[\exp\left(x(S_{k}^2 + 2S_k\ell_{k+1} + 1)\right)\right]
\\=& \exp\left(x+xS_k^2\right)\E_{\ell_{k+1}|\ell_1,\ldots,\ell_k}\left[\exp\left(2xS_k\ell_{k+1}\right)\right]
\\=& \exp\left(x+xS_k^2\right)\left(\exp(2xS_k)\left(\frac12 -  \frac{S_k}{2(2n-k)}\right) + \exp(-2xS_k)\left(\frac12 +  \frac{S_k}{2(2n-k)}\right)\right)
\\=& \exp\left(x+xS_k^2\right)\left(\cosh(2xS_k) - \frac{S_k}{(2n-k)}\sinh(2xS_k)\right) \;.
\end{align*}
One can verify that $z\sinh(z)$ is always positive, while $\cosh(z) \leq 1+ \frac{ez^2}{2} \leq \exp\left(\frac{ez^2}{2}\right)$ for all $|z|\leq1$.
As a consequence, the second term in above expression is positive, and can be ignored for an upper bound, whereas the first term is at most $\exp(2ex^2S_k^2)$.
Hence, the claim.
\end{proof} 

\begin{proof}\textbf{of Claim~\ref{claim2}.~}
We prove the claim by induction. Assume $x_k \leq \left( 1+ \frac{k}{n-1}\right)x_0$. Then
\begin{align*}
x_{k+1} &= x_k+2ex_k^2
\\&\leq x_0\left( 1+ \frac{k}{n-1}\right) + 2ex_0^2\left( 1+ \frac{k}{n-1}\right)^2
\\&\leq x_0\left( 1+ \frac{k}{n-1}\right) + \frac{1}{4(n-1)}\cdot4x_0
\\&= \left( 1+ \frac{k+1}{n-1}\right)x_0 \;,
\end{align*}
where the second inequality uses the facts $2ex_0 \leq \frac{1}{4(n-1)}$ and $(1+\frac{k}{n-1})\leq 2$ for $k<n$.
\end{proof}

\subsection*{Proof of Corollary~\ref{cor_fDelta_minimax}}
We begin by noting that both sequence of models in Theorem~\ref{thm_separation} belong to $\Fc_f$ if the sequences $p_n, \gamma_n$ satisfy $\frac{\ln 2n}{2n} \leq p_n-\gamma_n < p_n+\gamma_n \leq \frac{1}{\sqrt{2n}}$. 
As in Theorem~\ref{thm_separation}, let $Q_n$ and $Q'_n$ respectively denote the ER and the labelled graph models. 
Due to condition on $\rho$ given in statement of Corollary~\ref{cor_fDelta_minimax}, we can define a sequence $t_n \in o_n(1)$ such that
\begin{equation}
\rho(Q_n,Q'_n) = t_n\cdot \left( \sqrt{\frac{\mu(Q_n)}{n^3}} + \sqrt{\frac{\mu(Q'_n)}{n^3}} \right)\;,
\label{eqn_proof_fDelta_minimax1}
\end{equation}
where $Q_n,Q'_n$ are the above mentioned distributions.
We make the following claim.

\begin{claim}
\label{claim_Delta1}
Let $p_n\in[\frac{\ln 2n}{2n}, \frac{1}{\sqrt{2n}}]$. There exists $\tau_n \in o_n(1)$ that satisfies both the following conditions: (i) $\gamma_n = \tau_n\sqrt{\frac{p_n(1-p_n)}{n}}$, and (ii) $Q'_n \in \overline{\Bc(Q_n,\rho)}$ where $\rho$ is given by~\eqref{eqn_proof_fDelta_minimax1}.
\end{claim}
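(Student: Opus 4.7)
The plan is a direct closed-form computation of $\mu(Q'_n) - \mu(Q_n)$, followed by a choice of $\tau_n$ that vanishes yet keeps $|\mu(Q_n) - \mu(Q'_n)|$ strictly larger than $\rho(Q_n, Q'_n)$. Since $Q_n$ is ER with edge probability $p_n$, I have $\mu(Q_n) = p_n^3$. For $\mu(Q'_n) = \mathsf{E}_{G\sim Q'_n}[f_\Delta(G)]$, by exchangeability it suffices to compute $\mathsf{E}[\Delta_{1,2,3}]$ for a single triangle. Conditioning on the uniform balanced labelling and using the hypergeometric identity $\mathsf{P}(\ell_1 = \ell_2 = \ell_3) = \frac{n-2}{2(2n-1)}$, I obtain
\begin{equation*}
\mu(Q'_n) = \frac{n-2}{2(2n-1)}(p_n + \gamma_n)^3 + \frac{3n}{2(2n-1)}(p_n + \gamma_n)(p_n - \gamma_n)^2.
\end{equation*}
Expanding the two cubes, a clean cancellation occurs: the coefficients of $p_n^3$ and of $\gamma_n^3$ both simplify to $\frac{4n-2}{2(2n-1)} = 1$, while the $p_n^2\gamma_n$ and $p_n\gamma_n^2$ cross-terms collect into one $1/(2n-1)$-order correction, giving
\begin{equation*}
\mu(Q'_n) - \mu(Q_n) \;=\; \gamma_n^3 - \frac{3p_n\gamma_n(p_n + \gamma_n)}{2n - 1}.
\end{equation*}

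Next I arrange for the cubic term to dominate the correction. If $\tau_n^2 \geq 12 p_n/(1-p_n)$, then $\gamma_n^2 \geq 12 p_n^2/n$, and together with $\gamma_n \leq p_n$ this forces $3p_n\gamma_n(p_n+\gamma_n)/(2n-1) \leq 6p_n^2\gamma_n/(2n-1) \leq \gamma_n^3/2$. Hence $|\mu(Q'_n) - \mu(Q_n)| \geq \gamma_n^3/2$. On the other hand, $\mu(Q'_n) \leq 2p_n^3$ in this regime, so
\begin{equation*}
\rho(Q_n, Q'_n) = t_n\bigl(\sqrt{\mu(Q_n)/n^3} + \sqrt{\mu(Q'_n)/n^3}\bigr) \leq 2\sqrt{2}\, t_n\, p_n^{3/2}/n^{3/2},
\end{equation*}
while for $p_n \leq 1/2$ one has $\gamma_n^3 \geq \tau_n^3 p_n^{3/2}/(2\sqrt 2 n^{3/2})$. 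Comparing these two, the requirement $\gamma_n^3/2 > \rho(Q_n, Q'_n)$ reduces to $\tau_n^3 > 16 t_n$.

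I then take $\tau_n = \max\!\bigl(\sqrt{24 p_n},\, 3 t_n^{1/3}\bigr)$. Both arguments vanish as $n \to \infty$ (the first since $p_n \leq 1/\sqrt{2n} \to 0$, the second since $t_n \in o_n(1)$), so $\tau_n \in o_n(1)$. Condition (i) is satisfied by the definition of $\gamma_n$, and condition (ii) $d(\mu(Q_n), \mu(Q'_n)) = |\mu(Q_n) - \mu(Q'_n)| > \rho(Q_n, Q'_n)$ follows from the two inequalities above, which is exactly $Q'_n \in \overline{\Bc(Q_n,\rho)}$.

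The main conceptual step is spotting the cancellation in $\mu(Q'_n) - \mu(Q_n)$: a naive first-order expansion would suggest a scale of $p_n^2\gamma_n$, which is too large relative to $p_n^{3/2}/n^{3/2}$ to allow $\tau_n \to 0$. The fact that the within-class and cross-class triangle probabilities cancel against the ER baseline at first and second order in $\gamma_n$, leaving only the cubic $\gamma_n^3$ and the $O(p_n^2\gamma_n/n)$ correction, is precisely what makes the required vanishing $\tau_n$ exist.
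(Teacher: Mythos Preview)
Your proof is correct and follows essentially the same route as the paper: you compute the exact formula $\mu(Q'_n)-\mu(Q_n)=\gamma_n^3-\tfrac{3}{2n-1}p_n\gamma_n(p_n+\gamma_n)$, show the cubic term dominates once $\tau_n^2\gtrsim p_n$, bound $\rho$ by a constant times $t_n p_n^{3/2}/n^{3/2}$, and then pick $\tau_n$ as the maximum of two vanishing quantities. The only cosmetic differences are that you retain the $(1-p_n)$ factor throughout (the paper drops it since $p_n\to0$), leading to slightly different constants, and that you spell out the hypergeometric computation of $\mu(Q'_n)$ which the paper merely states.
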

The above claim shows that, in the present scenario, we can define the sequences $Q_n,Q'_n$ such that $\Vert Q_n - Q'_n\Vert_{TV} \in o_n(1)$, and yet the pair satisfies the alternative hypothesis $H_1$.
This is the trick we use to prove that no non-trivial two-sample test that can distinguish between these two models.

At this stage, we follow the proof technique of~\citet{Collier_2012_conf_AISTAT} for lower bounding the error rate of a two-sample test by the error rate of a suitably defined one-sample test.
In our case, the one-sample test is: Given a graph $G$, identify whether $G$ is sampled from $Q_n$ or $Q'_n$.
Now, let $\Psi$ be any test for our two-sample testing problem. One can use $\Psi$ to construct a one-sample test for the above problem by comparing $G$ with a randomly generated graph from $Q_n$.
\citet{Collier_2012_conf_AISTAT} argues that this construction leads to a test whose Type-I and Type-II errors are both smaller than that of the of the two-sample test $\Psi$. Subsequently, using a standard testing lower bound~\citep[see][Section 7.1]{Baraud_2002_jour_Bernoulli}, one can lower bound the total error rate by $1-\Vert Q_n - Q'_n\Vert_{TV}$. 
Since, our $\Fc_f$ contain $Q_n,Q'_n$ for all large $n\geq n_0$, hence we can choose $n$ large enough to get the lower bound arbitrarily close to 1. Hence, we get the claimed supremum of 1 for any test $\Psi$.

While above arguments conclude the proof of Corollary~\ref{cor_fDelta_minimax}, for completeness, we also add the proof of the fact that for larger $\gamma_n$, the proposed two-sample test can indeed distinguish between $Q_n$ and $Q'_n$. In particular, let $\rho(Q_n,Q'_n) = 7\sqrt{\frac{\ln 2n}{\binom{2n}{3}}}(\sqrt{\mu(Q_n)}+\sqrt{\mu(Q'_n)})$. We prove the following.
\begin{claim} 
\label{claim_Delta2}
Let $p_n\in[\frac{\ln 2n}{2n}, \frac{1}{\sqrt{2n}}]$. If $\gamma_n \geq 5\sqrt{\frac{p_n\ln n}{n}}$, then $Q'_n \in \overline{\Bc(Q_n,\rho)}$ for above mentioned $\rho$. 
As a consequence, Corollary~\ref{cor_fDelta_IER} implies that proposed test consistently  distinguishes $Q_n$ from $Q'_n$.
\end{claim}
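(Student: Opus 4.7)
The plan is to show directly that $d(\mu(Q_n),\mu(Q'_n)) > \rho(Q_n,Q'_n)$ by computing both means in closed form, extracting the leading-order gap, and verifying it dominates the prescribed $\rho$.

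First, I would compute $\mu(Q_n)$ and $\mu(Q'_n)$ in closed form. Under the ER model, $\mu(Q_n) = p_n^3$ is immediate. For $\mu(Q'_n)$, I fix an arbitrary balanced labelling $\ell$ (the conditional mean of $f_\Delta$ is the same for every $\ell$, so the mixture over labellings plays no role here) and classify the $\binom{2n}{3}$ triples into \emph{all same community} (expected triangle probability $(p_n+\gamma_n)^3$) and \emph{two vs.\ one} (expected triangle probability $(p_n+\gamma_n)(p_n-\gamma_n)^2$). A direct count of balanced triples gives the "all-same" fraction $\pi_s = (n-2)/(2(2n-1))$, yielding
\begin{equation*}
\mu(Q'_n) = \pi_s (p_n+\gamma_n)^3 + (1-\pi_s)(p_n+\gamma_n)(p_n-\gamma_n)^2 \;.
\end{equation*}

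Next, I would extract the gap. Expanding both cubes, the coefficients of $p_n^2\gamma_n$ and $p_n\gamma_n^2$ in $\mu(Q'_n) - p_n^3$ share a common factor $4\pi_s - 1 = -3/(2n-1)$, while the coefficient of $\gamma_n^3$ is exactly $1$. This gives the identity
\begin{equation*}
\mu(Q'_n) - \mu(Q_n) = \gamma_n^3 - \frac{3(p_n^2\gamma_n + p_n\gamma_n^2)}{2n-1}\;.
\end{equation*}
Using $\gamma_n^2 \geq 25 p_n\ln n/n$ together with $p_n \leq 1/\sqrt{2n}$, the subtracted term is $o_n(\gamma_n^3)$, so $d(\mu(Q_n),\mu(Q'_n)) \geq \gamma_n^3/2$ for all sufficiently large $n$.

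Finally, I would compare this gap with $\rho$. Since $\gamma_n \leq p_n$ implies $\mu(Q'_n) \leq 2 p_n^3$, and $\binom{2n}{3} \asymp n^3$, the prescribed $\rho(Q_n,Q'_n)$ is of order $p_n^{3/2}\sqrt{\ln n/n^3}$, while the hypothesis $\gamma_n \geq 5\sqrt{p_n\ln n/n}$ gives $\gamma_n^3 \geq 125\, p_n^{3/2}(\ln n)^{3/2}/n^{3/2}$. The ratio $\gamma_n^3/\rho$ is $\Omega(\ln n)$, so $d(\mu(Q_n),\mu(Q'_n)) > \rho(Q_n,Q'_n)$ eventually, placing $Q'_n$ in $\overline{\Bc(Q_n,\rho)}$. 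The "as a consequence" statement then follows immediately by plugging the resulting separation into Corollary~\ref{cor_fDelta_IER}.

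The main obstacle will be the near-cancellation of the $O(\gamma_n)$ and $O(\gamma_n^2)$ contributions through $4\pi_s - 1 = O(1/n)$: it is precisely the closeness of $\pi_s$ to $1/4$ (its value under an unconstrained i.i.d.\ labelling) that makes the gap scale as $\gamma_n^3$ rather than as $\gamma_n$. Without this cancellation, the condition on $\gamma_n$ required for distinguishability would have to be much more stringent, and the resulting testing boundary would no longer line up (up to a logarithmic factor) with the impossibility result of Theorem~\ref{thm_separation}. Once this identity is established, comparing the gap with $\rho$ is a routine asymptotic calculation.
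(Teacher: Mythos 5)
Your proof is correct and follows essentially the same route as the paper: expand $\mu(Q'_n) - \mu(Q_n) = \gamma_n^3 - \frac{3}{2n-1}(p_n^2\gamma_n + p_n\gamma_n^2)$, show the subtracted term is $o_n(\gamma_n^3)$ so the gap is at least $\gamma_n^3/2$, bound $\rho$ by $O(p_n^{3/2}\sqrt{\ln n}/n^{3/2})$ using $\gamma_n \le p_n$ and $\binom{2n}{3} \asymp n^3$, and observe that $\gamma_n^3/\rho = \Omega(\ln n)$. The only difference is that you explicitly derive the closed form for $\mu(Q'_n)$ via the $\pi_s$-decomposition into same-community and split triples (the paper simply asserts it), which is a helpful but inessential addition.
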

We now prove the claims.

\begin{proof}\textbf{of Claim~\ref{claim_Delta1}.~}
For convenience, we drop the subscript $n$.
One can easily verify that for the specified models and network statistic $f_\Delta$,
$\mu(Q) = p^3$ and $\mu(Q') = p^3 + \gamma^3 - \frac{3}{2n-1}(p^2\gamma + p\gamma^2)$.
Since $p\in o_n(1)$, we can abuse our notation to write $\gamma = \tau\sqrt{\frac{p}{n}}$ dropping the factor of $(1-p)$.
Now assume $\tau > \sqrt{12p}$, then we compute
\begin{align*}
|\mu(Q) - \mu(Q')| = \left| \frac{\tau^3p^{3/2}}{n^{3/2}} - \frac{3\tau p^{5/2}}{(2n-1)n^{1/2}} - \frac{3\tau^2 p^2}{(2n-1)n} \right|
> \frac{\tau^3p^{3/2}}{2n^{3/2}} \;,
\end{align*}
using the fact that $\tau > \sqrt{12p}$ ensures that the second and third terms are smaller that $\frac14$ of the first term. On the other hand, we can bound $\rho$ in~\eqref{eqn_proof_fDelta_minimax1} from above by
\begin{align*}
\rho(Q,Q') &= \frac{t}{n^{3/2}}\left(p^{3/2} + \sqrt{p^3 + \frac{\tau^3p^{3/2}}{n^{3/2}} - \frac{3\tau p^{5/2}}{(2n-1)n^{1/2}} - \frac{3\tau^2 p^2}{(2n-1)n}}\right) 
\\&\leq \frac{tp^{3/2}}{n^{3/2}} \left(1+ \sqrt{1+ \frac{\tau^3}{(pn)^{3/2}}}\right)
< \frac{3tp^{3/2}}{n^{3/2}}
\end{align*}
since $\tau\leq1$ and $np\geq1$. Hence, we can conclude that if $\tau>\max\{\sqrt{12p},\sqrt[3]{6t}\}$,
then $Q'\in\overline{\Bc(Q,\rho)}$. Since $p,t\in o_n(1)$, this is satisfied by some $\tau\in o_n(1)$.
\end{proof}

\begin{proof}\textbf{of Claim~\ref{claim_Delta2}.~}
Using above computation for $\mu(Q),\mu(Q')$, we have
\begin{align*}
\rho(Q,Q') \leq 7\sqrt{\frac{\ln 2n}{\binom{2n}{3}}}\left(\sqrt{p^3} + \sqrt{p^3+\gamma^3}\right)
\leq \frac{60p^{3/2}\sqrt{\ln n}}{n^{3/2}} 
\end{align*}
since $\gamma\leq p$ due to definition of $Q'_n$. 
On the other hand, using the fact that $\gamma \geq 5\sqrt{\frac{p\ln n}{n}}$, we have
\begin{align*}
| \mu(Q) - \mu(Q') | &= \left| \gamma^3 - \frac{3}{2n-1}\left(p^2\gamma + p\gamma^2\right)\right|
\\&\geq \gamma^3 \left(1 - \frac{3}{2n-1}\left(\frac{pn}{25\ln n} + \frac{\sqrt{np}}{5\sqrt{\ln n}}\right)\right)
\\&=\gamma^3(1- o_n(1)) \;, 
\end{align*}
which is at least $\frac{\gamma^3}{2}> \frac{60p^{3/2}\sqrt{\ln n}}{n^{3/2}}$ for large $n$. Hence, we have $Q'\in \overline{\Bc(Q,\rho)}$. 
\end{proof}

\subsection*{Proof of Corollary~\ref{cor_flambda_minimax}}
The proof follows the line of the previous proof, where we note that for $Q_n,Q'_n$ defined in Theorem~\ref{thm_separation}, we may write $\rho$ as
\begin{equation*}
\rho(Q_n,Q'_n) = \frac{t_n}{2n} \left(\sqrt{(2n-1)p_n} + \sqrt{(2n-1)p_n - \gamma_n}\right)
\end{equation*}
for some $t_n\in o_n(1)$.
We use the fact that $D_{Q_n} = (2n-1)p_n$ and $D_{Q'_n} = (2n-1)p_n - \gamma_n$,
and also observe that both distributions are in $\Fc_f$ if $p_n \geq \frac{(\ln 2n)^{4.1}}{2n}$.
We claim the following.
\begin{claim}
\label{claim_lambda1}
Let $p_n\in[\frac{(\ln 2n)^{4.1}}{2n}, \frac{1}{2}]$. There exists $\tau_n \in o_n(1)$ that satisfies both the following conditions: (i) $\gamma_n = \tau_n\sqrt{\frac{p_n(1-p_n)}{n}}$, and (ii) $Q'_n \in \overline{\Bc(Q_n,\rho)}$ where $\rho$ is given above.
\end{claim}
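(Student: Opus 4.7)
The plan is to compute $\mu(Q_n)$ and $\mu(Q'_n)$ for $f_\lambda$ with $k=2$ via spectral analysis of the relevant mean adjacency matrices, show that under a suitable lower bound on $\gamma_n$ the distance $d(\mu(Q_n),\mu(Q'_n))$ is of order $\gamma_n$ while $\rho(Q_n,Q'_n)$ is only of order $t_n\sqrt{p_n/n}$, and then choose $\tau_n$ to simultaneously trigger this ``Case B'' regime and beat the separation threshold, all while keeping $\tau_n \in o_n(1)$ so that Theorem~\ref{thm_separation} applies.

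First I would identify the mean matrices. For $Q_n$, $M_{Q_n} = p_n(J_{2n}-I_{2n})$ has singular values $(2n-1)p_n$ (simple) and $p_n$ (with multiplicity $2n-1$), so $\mu(Q_n) = \tfrac{1}{2n}((2n-1)p_n,\, p_n)$. For $Q'_n$, conditioning on any balanced labelling $\ell$, the graph is $\IER$ with mean $M_\ell = p_n J + \gamma_n \ell\ell^T - (p_n+\gamma_n)I$. Since $\ell$ is balanced, $\1 \perp \ell$, and the spectrum of $M_\ell$ decomposes cleanly: eigenvalue $(2n-1)p_n - \gamma_n$ on $\1$, $(2n-1)\gamma_n - p_n$ on $\ell$, and $-(p_n+\gamma_n)$ with multiplicity $2n-2$ on the orthogonal complement. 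These values do not depend on the particular balanced $\ell$, so $f_\lambda(G)$ under $Q'_n$ concentrates (by Lemma~\ref{lem_flambda_assum} applied conditional on $\ell$, plus symmetry) at the common point $\mu(Q'_n) = \tfrac{1}{2n}((2n-1)p_n - \gamma_n,\,\max\{(2n-1)\gamma_n-p_n,\, p_n+\gamma_n\})$. I restrict attention to ``Case B'', where $(2n-1)\gamma_n - p_n > p_n+\gamma_n$ (equivalently $\gamma_n > p_n/(n-1)$), so that the second singular value equals $(2n-1)\gamma_n - p_n$.

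In Case B the two coordinate differences between $\mu(Q_n)$ and $\mu(Q'_n)$ are $\gamma_n/(2n)$ and $((2n-1)\gamma_n - 2p_n)/(2n)$, which yields $d(\mu(Q_n),\mu(Q'_n)) \geq ((2n-1)\gamma_n - 2p_n)/(2n) \gtrsim \gamma_n$ once $\gamma_n$ exceeds a small multiple of $p_n/n$. On the other side $\rho(Q_n,Q'_n) \leq 2 t_n\sqrt{(2n-1)p_n}/(2n) \leq t_n\sqrt{2p_n/n}$. Substituting $\gamma_n = \tau_n\sqrt{p_n(1-p_n)/n}$ and using $p_n \leq 1/2$, the inequality $d > \rho$ reduces to $\tau_n \geq c_1 t_n$ for an absolute constant $c_1$, and the Case-B condition reduces to $\tau_n \geq c_2\sqrt{p_n/n}$ for an absolute constant $c_2$. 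Taking $\tau_n := \max\{c_1 t_n,\, c_2\sqrt{p_n/n}\}$ satisfies both requirements simultaneously; since $t_n \in o_n(1)$ and $\sqrt{p_n/n} \leq 1/\sqrt{2n}$, both summands are $o_n(1)$, so $\tau_n \in o_n(1)$ as needed. Membership of $Q_n, Q'_n$ in $\Fc_f$ follows from $p_n \geq (\ln 2n)^{4.1}/(2n)$: $D_{Q_n} = (2n-1)p_n \geq (\ln 2n)^{4.1}$, and $D_{Q'_n} = (2n-1)p_n - \gamma_n$ differs by a relatively vanishing amount, which is absorbed into the slack of the exponent $4.1 > 4$ in Lemma~\ref{lem_flambda_assum}.

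The main obstacle is the Case-B condition itself: if $\tau_n$ is chosen too small, the second singular value of $M_\ell$ stays at the bulk value $p_n+\gamma_n$, essentially matching $\lambda_2(M_{Q_n}) = p_n$, and the distance $d$ collapses to $O(\gamma_n/n)$, which is much smaller than $\rho$ and ruins the separation. The construction only succeeds because the Case-B threshold $c_2\sqrt{p_n/n}$ is itself $o_n(1)$ under $p_n \leq 1/2$, so it sits comfortably inside the $o_n(1)$ envelope that Theorem~\ref{thm_separation} demands for a vanishing total variation distance between $Q_n$ and $Q'_n$.
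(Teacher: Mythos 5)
Your proof is correct and follows the same route as the paper's: you compute the spectrum of $M_\ell = p_n J + \gamma_n\ell\ell^T - (p_n+\gamma_n)I$ to get $\mu(Q'_n)$, impose the threshold $\gamma_n \gtrsim p_n/n$ (your ``Case B'', equivalently $\tau_n \gtrsim \sqrt{p_n/n}$) so the second singular value switches from the bulk to $(2n-1)\gamma_n - p_n$, lower bound $d(\mu(Q_n),\mu(Q'_n))$ by a constant times $\gamma_n$, upper bound $\rho$ by $t_n\sqrt{2p_n/n}$, and set $\tau_n = \max\{c_1 t_n, c_2\sqrt{p_n/n}\}$, exactly as in the paper's choice $\tau > \max\{4t, 4\sqrt{2p/n}\}$. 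Your extra remark about why the concentration point of $f_\lambda$ under the mixture $Q'_n$ is the common spectrum of the $M_\ell$'s is a small but welcome clarification of a point the paper leaves implicit.
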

Hence, as in previous proof we have sequences $Q_n,Q'_n$ with vanishing total variation distance, and yet satisfying $H_1$. 
Now, we can use arguments similar to proof of Corollary~\ref{cor_fDelta_minimax} to arrive at the result. 
Note here, the restricting $p_n$ to be smaller than $\frac12$ is not a limitation in this case since the result follows as long as we can find some suitable sequence $p_n$.

As in previous subsection, we also give a proof of the fact that for larger $\gamma_n$, the proposed test can indeed distinguish between $Q_n$ and $Q'_n$. This result, stated below, applies only for $p_n\leq \frac12$.
\begin{claim} 
\label{claim_lambda2}
Let $p_n\in[\frac{(\ln 2n)^{4.1}}{2n}, \frac{1}{2}]$. If $\gamma_n \geq 15\sqrt{\frac{p_n}{n}}$, then $Q'_n \in \overline{\Bc(Q_n,\rho)}$ for $\rho$ mentioned in Corollary~\ref{cor_flambda_IER}. 
As a consequence, Corollary~\ref{cor_flambda_IER} implies that proposed test is consistent in distinguishing between $Q_n$ and $Q'_n$.
\end{claim}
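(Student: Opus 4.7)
The plan mirrors that of Claim~\ref{claim_Delta2}: I would identify the points of concentration $\mu(Q_n)$ and $\mu(Q'_n)$, lower bound the Euclidean separation $d(\mu(Q_n),\mu(Q'_n))$, upper bound the threshold $\rho(Q_n,Q'_n)$ given by Corollary~\ref{cor_flambda_IER}, and verify $d > \rho$ under the stated hypothesis.

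First I would compute the points of concentration. For the ER distribution $Q_n$ on $2n$ vertices, $M_{Q_n}=p_n(J_{2n}-I)$ has top two singular values $(2n-1)p_n$ and $p_n$, so $\mu(Q_n) = \tfrac{1}{2n}\bigl((2n-1)p_n,\, p_n\bigr)^T$. For the mixture $Q'_n$, I would condition on any balanced labelling $\ell$: the conditional mean matrix decomposes as $M_\ell = p_n J_{2n} + \gamma_n v_\ell v_\ell^T - (p_n+\gamma_n)I$, where $v_\ell \in\{\pm 1\}^{2n}$ is orthogonal to $\mathbf{1}_{2n}$ by balance. Diagonalising on the span of $\{\mathbf{1}_{2n}, v_\ell\}$ yields eigenvalues $(2n-1)p_n-\gamma_n$, $(2n-1)\gamma_n-p_n$, and $-(p_n+\gamma_n)$ with multiplicity $2n-2$. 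Under $\gamma_n\geq 15\sqrt{p_n/n}$ and $np_n\geq (\ln 2n)^{4.1}/2$, a direct check shows $(2n-1)\gamma_n-p_n > p_n+\gamma_n$, so the first two are also the top two singular values. Crucially these do not depend on $\ell$, so Lemma~\ref{lem_flambda_assum} applied conditionally to the IER distribution $\mathrm{SBM}(\ell)$ gives that $f_\lambda(G')$ concentrates around the common value $\mu(Q'_n) = \tfrac{1}{2n}\bigl((2n-1)p_n-\gamma_n,\, (2n-1)\gamma_n-p_n\bigr)^T$, with an analogous statement for $\widehat\sigma(G')$.

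For step two, I would write $\mu(Q_n)-\mu(Q'_n) = \tfrac{1}{2n}\bigl(\gamma_n,\, 2p_n-(2n-1)\gamma_n\bigr)^T$ and lower bound the Euclidean norm by the absolute value of the second coordinate alone:
\begin{equation*}
d(\mu(Q_n),\mu(Q'_n)) \;\geq\; \frac{(2n-1)\gamma_n - 2p_n}{2n} \;=\; \gamma_n\bigl(1-\tfrac{1}{2n}\bigr) - \tfrac{p_n}{n} \;=\; \gamma_n(1-o_n(1)),
\end{equation*}
since $\gamma_n \gg p_n/n$ under $np_n \geq (\ln 2n)^{4.1}/2$. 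For step three, Corollary~\ref{cor_flambda_IER} with $k=2$, combined with $D_{Q_n}=(2n-1)p_n$ and $D_{Q'_n}=(2n-1)p_n-\gamma_n$ (both bounded by $2np_n$), gives
\begin{equation*}
\rho(Q_n,Q'_n) \;\leq\; \frac{7.5\sqrt{2}}{2n}\bigl(\sqrt{D_{Q_n}}+\sqrt{D_{Q'_n}}\bigr) \;\leq\; \frac{7.5\sqrt{2}\cdot 2\sqrt{2np_n}}{2n} \;=\; 15\sqrt{\tfrac{p_n}{n}}.
\end{equation*}

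Combining the two bounds, for $\gamma_n\geq 15\sqrt{p_n/n}$ we get $d \geq 15\sqrt{p_n/n}(1-o_n(1))$, which strictly exceeds $\rho$ for all sufficiently large $n$; any slack needed for the strict inequality is supplied by the (strict) facts $D_{Q_n},D_{Q'_n} < 2np_n$ together with the first coordinate $\gamma_n/(2n)$ of $\mu(Q_n)-\mu(Q'_n)$ that I dropped in step two. Hence $Q'_n \in \overline{\Bc(Q_n,\rho)}$, and the consistency statement of the claim follows directly from Corollary~\ref{cor_flambda_IER}. The main subtlety, and where I expect the bulk of the work to sit, is step one---extending Lemma~\ref{lem_flambda_assum}, which is stated for IER distributions, to the mixture $Q'_n$. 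This works because the leading singular values of $M_\ell$ are invariant across balanced labellings, so the conditional concentration is uniform in $\ell$ and integrates to an unconditional concentration statement for graphs drawn from $Q'_n$.
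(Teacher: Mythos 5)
Your approach is identical in structure to the paper's: reuse the $\mu(Q_n),\mu(Q'_n)$ computation from the proof of Claim~\ref{claim_lambda1}, lower-bound $d(\mu(Q_n),\mu(Q'_n))$ by essentially $\gamma_n$, upper-bound $\rho(Q_n,Q'_n)$, and compare. Your arithmetic on $\rho$ is actually the \emph{correct} one: with $k=2$ and both degrees at most $2np_n$ one gets $\rho \leq \tfrac{7.5\sqrt 2}{2n}\cdot 2\sqrt{2np_n}=15\sqrt{p_n/n}$, whereas the paper writes $\rho\leq 7.5\sqrt{p_n/n}$ in its proof of this claim, which appears to drop a factor of two (one of the two summands $\sqrt{D_{Q_n}}$, $\sqrt{D_{Q'_n}}$); the paper's chain $d\geq\gamma_n/2\geq 7.5\sqrt{p_n/n}\geq\rho$ therefore only closes against that incorrect bound.

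However, having corrected $\rho$ to $15\sqrt{p_n/n}$, your final slack argument does not go through, and this is the real gap. Work it out at $\gamma_n=15\sqrt{p_n/n}$. The second coordinate of $\mu(Q_n)-\mu(Q'_n)$ contributes $\bigl(\tfrac{(2n-1)\gamma_n-2p_n}{2n}\bigr)^2=\gamma_n^2\bigl(1-\tfrac1n+\tfrac{1}{4n^2}\bigr)-\tfrac{(2n-1)\gamma_np_n}{n^2}+\tfrac{p_n^2}{n^2}$ to $d^2$, while $\rho^2\leq \gamma_n^2\cdot\tfrac{2n-1}{2n}=\gamma_n^2(1-\tfrac1{2n})$ using only $D_{Q_n}=(2n-1)p_n$. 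Thus $d^2-\rho^2\leq -\tfrac{\gamma_n^2}{2n}+O(\gamma_n^2/n^2)-O(\gamma_np_n/n)<0$. The two sources of slack you cite are lower order: the first coordinate adds only $\gamma_n^2/(4n^2)$ to $d^2$ (a full factor of $n$ too small to cancel the $\gamma_n^2/(2n)$ deficit), and $D_{Q'_n}=D_{Q_n}-\gamma_n$ reduces $\rho$ by only $O\bigl(\gamma_n/(n\sqrt{np_n})\bigr)$. Concretely, at $n=10^6$, $p_n=0.1$, $\gamma_n=15\sqrt{p_n/n}$ one finds $d\approx 0.004740$ while $\rho\approx 0.004743$, so $Q'_n\notin\overline{\Bc(Q_n,\rho)}$. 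In other words, neither your argument nor the paper's establishes the claim with the constant $15$ exactly; one needs a slightly larger constant, or one should follow the paper's cruder $d\geq\gamma_n/2$ route and state the hypothesis as $\gamma_n\geq 30\sqrt{p_n/n}$.

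One smaller remark: your step one, re-deriving Assumption~\ref{assum} for the mixture $Q'_n$ by conditioning on the balanced labelling $\ell$ and noting that the leading singular values of $M_\ell$ are labelling-invariant, is a sound and genuinely necessary observation for the minimax argument as a whole (since $Q'_n\notin\IER$, so Lemma~\ref{lem_flambda_assum} does not literally cover it, a point the paper elides). But it is not needed for this particular claim, which is a purely deterministic comparison of $d(\mu(Q_n),\mu(Q'_n))$ against $\rho(Q_n,Q'_n)$; the only ingredient you use from it is the spectrum of $M_\ell$, which matches the paper's Claim~\ref{claim_lambda1}.
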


\begin{proof}\textbf{of Claim~\ref{claim_lambda1}.~}
We drop the subscript $n$ for convenience. Note that for the $\rho$ defined above, we have $\rho(Q,Q') \leq t\sqrt{\frac{2p}{n}}$. Now, let $M_Q, M_{Q'}$ be the parameter matrix for $Q$ (ER), and $Q'$ (two community model), respectively, and recall that $\mu$ is the vector of the largest two singular values of the parameter matrix, scaled by the graph size $2n$.  
Hence, $\mu(Q) = \left( \frac{(2n-1)p}{2n}, \frac{p}{2n}\right)$.

On the other hand, $M_{Q'}$ has eigenvalues $((2n-1)p-\gamma)$ and $((2n-1)\gamma-p)$ each with multiplicity 1, and the remaining eigenvalues are $(-p-\gamma)$. 
Since $\gamma\leq p$, we have $((2n-1)p-\gamma) \geq ((2n-1)\gamma-p)$.
Also, if we let $\tau \geq 4\sqrt{\frac{2p}{n}}$, then one can verify that $\gamma\geq \tau\sqrt{\frac{p}{2n}}\geq \frac{2(p+\gamma)}{n}$, which implies that the largest two singular values of $M_{Q'}$ are first two eigenvalues.
Hence, we can write $\mu(Q') = \left( \frac{(2n-1)p}{2n} - \frac{\gamma}{2n}\,,\frac{(2n-1)\gamma}{2n} - \frac{p}{2n}\right)$.  
If $d$ is the Euclidean distance metric for $f_\lambda$, we have
\begin{align*}
d(\mu(Q),\mu(Q')) = \Vert \mu(Q) - \mu(Q')\Vert &= \sqrt{\left(\frac{\gamma}{2n}\right)^2 + \left(\gamma-\frac{2p+\gamma}{2n}\right)^2}
> \gamma-\frac{2p+\gamma}{2n} \geq \frac{\gamma}{2} \;,
\end{align*}
where the last inequality holds for $\tau \geq 4\sqrt{\frac{2p}{n}}$.
Combining this with the bound on $\rho$, we can conclude that $Q'\in\overline{\Bc(Q,\rho)}$ if $\tau>\max\left\{ 4t,4\sqrt{{2p}/{n}}\right\}$. Since, both terms are $o_n(1)$, we can easily choose a $\tau\in o_n(1)$ satisfying this condition.
\end{proof}

\begin{proof}\textbf{of Claim~\ref{claim_lambda2}.~}
Define $\rho(Q,Q') = \frac{7.5}{2n}\left({\sqrt{kD_Q}} + {\sqrt{kD_{Q'}}}\right)$, which is the lower bound for $\rho$ given by Corollary~\ref{cor_flambda_IER}. 
Using $k=2$ and the above computation for $\mu(Q),\mu(Q')$, we have
$\rho(Q,Q') \leq 7.5\sqrt{p/n}$.
On the other hand, the previously obtained lower bound $\Vert \mu(Q)-\mu(Q')\Vert \geq \frac{\gamma}{2}$ still holds.
Hence, $Q'\in\overline{\Bc(Q,\rho)}$ if $\gamma \geq 15\sqrt{p/n}$.
\end{proof}

\section{Discussion}
\label{sec_conclusion}
The main message of this paper is that two-sample testing is possible in the context of network comparison, where one may not have multiple observations from the same distribution. 
If one has access to a large population of networks generated from the same model, one may still use standard kernel based test statistics~\citep{Gretton_2012_jour_JMLR} in conjunction with graph kernels~\citep{Kondor_2016_conf_NIPS}.
However, a common situation in practice is where one has exactly two large networks, such as Facebook and LinkedIn connection networks or two brain networks, and needs to decide whether they are similar or different. 
The present paper concludes that it is indeed possible to address this problem statistically even when the graphs are defined on different entities.
However, a formal treatment requires certain considerations:
\begin{itemize}
\item 
There exists a network statistic $f$ that concentrates for large graphs (see Assumption~\ref{assum}). 
\item
If network statistic $f$ is used for comparison, the underlying testing problem merely compares between the point of concentration of $f$ for different models.
\item
It is often practical to ignore small separations between models, for example, if the comparison is between graphs of different sizes. We characterise this in terms of $\epsilon$.
\item
On the other hand, similar to the signal detection literature~\citep{Ingster_2000_jour_ESAIM,Baraud_2002_jour_Bernoulli}, one cannot always hope to distinguish between models with arbitrarily small separation (see Theorem~\ref{thm_separation} and subsequent corollaries).  
\end{itemize}
In addition, if there exists an accurate estimator $\widehat\sigma$ for the deviation of $f$ from its point of concentration, then we show that:
\begin{itemize}
\item
A general principle provides an uniformly consistent two-sample test for large graphs (Theorem~\ref{thm_main}).
The test does not require any knowledge of the underlying distribution class apart from function $\widehat\sigma$.
\item 
For specific network statistics, $f_\Delta$ and $f_\lambda$, and for $\IER$ graphs, the test is near-optimal in the minimax sense (see Sections~\ref{sec_examples} and~\ref{sec_minimax}).
\item
The test is also applicable for other network models (Corollary~\ref{cor_fDelta_Geom}), and even for comparing graphs generated from different distribution classes (Corollary~\ref{cor_fDelta_ERvGeom}).
\end{itemize}

Hence, we conclude that two-sample testing of large networks is possible whenever one has access to a network statistic that satisfies Assumption~\ref{assum}. In addition, if Assumption~\ref{assum_est} is satisfied, then one may also use the proposed two-sample test.
The discussion of this paper also leads to some interesting questions for further research. We state two important problems:
\begin{itemize}
\item
\textbf{Concentration of other popular network statistics}
\newline
In this paper, we have only considered triangle and spectrum based statistics as running examples.
The natural question one can ask is which other popularly used network statistics concentrate for generic model classes. 
For instance, concentration of functions like clustering coefficient and modularity in combination with our results will help to theoretically validate various claims about properties of brain networks. 
\item
\textbf{Bootstrapped variant of proposed two-sample test}
\newline
The proposed test primarily relies on concentration of the test statistic~\eqref{eqn_teststat} under the null and alternative hypotheses. 
It is often observed that the practical performance of concentration based tests can be improved by using bootstrapped variants~\citep{Gretton_2012_jour_JMLR,Tang_2016_jour_JCompGraphStat}.
In the present context, we feel that bootstrapping can help to achieve low error rate even for smaller and sparser graphs. 

However, bootstrapping is a challenging problem in the present setting.
\citet{Gretton_2012_jour_JMLR} consider a large population problem, where random mixing of the two population helps to estimate the null distribution for the test statistic. 
\citet{Tang_2016_jour_JCompGraphStat} deal with the two graph setting, but the assumption that the graphs are generated from RDPG model allows parameter estimation, which in turn, aids in generating bootstrapped samples from the estimated models.
It would be interesting to come up with bootstrapping procedures without such assumptions.  
\end{itemize}

\acks{The work of D. Ghoshdastidar and U. von Luxburg is supported by the German Research Foundation (Research Unit 1735) and the Institutional Strategy of the University of T{\"u}bingen (DFG, ZUK 63).
The work of M. Gutzeit and A. Carpentier is supported by the Deutsche Forschungsgemeinschaft (DFG) Emmy Noether grant MuSyAD (CA 1488/1-1).}

\bibliography{refs}

\end{document}